\newsavebox{\largestimage}
\titlespacing\section{0pt}{12pt plus 4pt minus 2pt}{3pt plus 2pt minus 2pt}
\titlespacing\subsection{0pt}{12pt plus 4pt minus 2pt}{2pt plus 2pt minus 2pt}
\titlespacing\subsubsection{0pt}{12pt plus 4pt minus 2pt}{2pt plus 2pt minus 2pt}
\newtheorem{theorem}{Theorem}[section]
\newtheorem{proposition}[theorem]{Proposition}
\newtheorem*{remark}{Remark}
\renewenvironment{abstract}{
      \@beginparpenalty\@lowpenalty
      \small
      \begin{center}%
        \bfseries \abstractname
        \@endparpenalty\@M
      \end{center}}%
     {\par}
\def\email#1{{\tt#1}}
\newcommand{\edited}[1]{\textcolor{black}{#1}}
\begin{document}
\title{Enabling Trade-offs in Privacy and Utility in Genomic Data Beacons and Summary Statistics\thanks{We acknowledge support of this work by the National Institutes of Health (NIH) under grant RM1HG009034 and the National Science Foundation (NSF) CAREER award program under grant IIS-1905558.}}

\author{Rajagopal Venkatesaramani}
\affil{Washington University in St. Louis, St. Louis MO 63130, USA \\ \email{rajagopal@wustl.edu}}

\author{Zhiyu Wan}
\affil{Vanderbilt University Medical Center, Nashville TN 37212, USA \\ \email{zhiyu.wan@vanderbilt.edu}}

\author{Bradley A. Malin}
\affil{Vanderbilt University Medical Center, Nashville TN 37212, USA \\ \email{b.malin@vumc.org}}

\author{Yevgeniy Vorobeychik}
\affil{Washington University in St. Louis, St. Louis MO 63130, USA \\ \email{yvorobeychik@wustl.edu}}

\date{}
\maketitle 

\begin{abstract}
The collection and sharing of genomic data are becoming increasingly commonplace in research, clinical, and direct-to-consumer settings. The computational protocols typically adopted to protect individual privacy include sharing summary statistics, such as allele frequencies, or limiting query responses to the presence/absence of alleles of interest using web-services called Beacons. However, even such limited releases are susceptible to likelihood-ratio-based membership-inference attacks. Several approaches have been proposed to preserve privacy, which either suppress a subset of genomic variants or modify query responses for specific variants (e.g., adding noise, as in differential privacy). However, many of these approaches result in a significant utility loss, either suppressing many variants or adding a substantial amount of noise. In this paper, we introduce optimization-based approaches to explicitly trade off the utility of summary data or Beacon responses and privacy with respect to membership-inference attacks based on likelihood-ratios, combining variant suppression and modification. We consider two attack models. In the first, an attacker applies a likelihood-ratio test to make membership-inference claims. In the second model, an attacker uses a threshold that accounts for the effect of the data release on the separation in scores between individuals in the dataset and those who are not. We further introduce highly scalable approaches for approximately solving the privacy-utility tradeoff problem when information is either in the form of summary statistics or presence/absence queries. Finally, we show that the proposed approaches outperform the state of the art in both utility and privacy through an extensive evaluation with public datasets.

\textbf{Keywords---}Genomic Privacy, Optimization, Anonymity
\end{abstract}


\section{Introduction}
\label{S:intro}

The past several years have seen a sharp rise in the collection and sharing of genomic data as a result of advancements in personalized medicine technology in clinical settings, as well as the rising popularity of direct-to-consumer genetic testing. Data sharing in the former setting is usually controlled through a combination of technical safeguards in order to comply with privacy-protection laws, as well as data-sharing agreements \cite{wan2022sociotechnical}. The latter contributes to sharing of genomic data in both research settings as well as open sharing of data through websites such as OpenSNP \cite{greshake2014opensnp}, where users may, under the guise of anonymity \cite{venkatesaramani2021re}, upload their genome as sequenced by companies such as 23andMe, intending for the data to be useful to researchers and other individuals alike. Commonly used measures to protect individual privacy when sharing genomic data in research settings often involve sharing limited information, for example, queries about the presence or absence of particular single-nucleotide variants \cite{fiume2019federated}, or summary statistics about single-nucleotide polymorphisms~\cite{macarthur2021workshop,sankararaman2009genomic}. Such limited-information releases were initially thought to sufficiently protect individual privacy. However, both presence/absence queries, as well as summary statistics have been shown to be susceptible to membership inference attacks using likelihood ratio tests \cite{homer2008resolving,shringarpure2015privacy}. Typically, it is assumed that an attacker has access to a set of target genomes and leverages statistical tests to infer whether each target individual is present in the dataset. This information about membership in the database can, in turn, be linked to other sensitive information about the individual, based on the metadata associated with the data release.  For instance, a dataset may be known to contain individuals with a certain clinical condition (e.g., heart condition).

A host of techniques have been proposed over the years to protect privacy of released genomic data. Most such methods involve some form of data obfuscation or suppression of a subset of the data, where specific techniques include leveraging the theoretical bounds on the power of inference attacks \cite{sankararaman2009genomic}, federated access control in the form of presence or absence queries \cite{fiume2019federated} or summary statistics \cite{macarthur2021workshop}, optimization-based and game-theoretic approaches \cite{venkatesaramani2021defending,wan2017controlling,wan2017expanding}, and randomization-based techniques, including differential privacy~\cite{dwork2006calibrating}, randomly masking rare alleles \cite{raisaro2017addressing} or simply publishing noisy summary statistics. However, these techniques often do not allow the data custodian to  trade off utility and privacy at sufficiently high resolution, requiring a large utility loss in order to guarantee a desired level of privacy. Further, such methods usually use only one type of data obfuscation---either adding noise to queries or summary statistics, or data suppression---in order to achieve their privacy goals, whereas there may be significant utility gains from combining these, as we show below.

We consider membership-inference attacks on genomic data sharing in two summary release models: 1) summary statistics, where a service publishes, for example, alternate allele frequencies for each genomic variant~\cite{sankararaman2009genomic}, and 2) simple ``minor allele existence'' responses that allow users to query whether a particular allele (e.g., a ``C" or a ``G") is present at a specific position on the genome (e.g., position 1,234,567 on chromosome 8), as done by the Beacon services introduced by the Global Alliance for Genomics and Health (GA4GH) in 2015~\cite{global2016federated}.
We make two contributions.
First, we present a novel model of defense through the lens of an optimization problem that combines query suppression and addition of noise to query responses, and explicitly trades off utility and privacy.
Second, we present two models of attack that inform the privacy component of the utility; the first of these is the conventional approach making use of a fixed threshold to determine membership, while the second makes membership claims adaptively to the defense by choosing a threshold that well separates the individuals in the protected service from those in a reference population.
Third, we present highly scalable algorithmic approaches for both problem settings, and for both threat models, and demonstrate that our approaches 
improve on the state of the art \emph{both} in utility and privacy, while easily scaling to problem instances involving over 1.3 million genomic variants. 
\subsection{Preliminaries}
\label{sec:prelims}

\textbf{\textsc{Data Representation}}
\begin{table}[htb]
\centering
\caption{General Notation}
\begin{tabular}{ll}
\hline
\\[-2ex]
$m$ & The number of SNVs.                                                                           \\
$n$ & The number of individuals.                                                                    \\
$D$ & The dataset of individuals for whom summary information is released.                          \\
$\bar{D}$ & A set of individuals not in the dataset $D$.                                                  \\
$\bar{D}^{(K)}$ & The set of individuals in $\bar{D}$ with LRT scores in the lowest $K$ percentile.             \\
$p_j$ & Alternate allele frequency (AAF) for SNV $j$ for individuals in dataset $D$.             \\
$\bar{p}_j$ & Alternate allele frequency (AAF) for SNV $j$ for individuals in reference set $\bar{D}$. \\
$d_{ij}$ & Binary indicator for whether individual $i$ has an alternate allele for SNV $j$.                  \\
$R^j_n$ & Probability that no individual in $D$ has alternate allele for SNV $j$.                       \\
$x_j$ & Summary release for SNV $j$ - binary for Beacons, real-valued for AAF.                        \\
$\delta$ & Noise added to summary release - binary for Beacons, real-valued for AAF.                     \\
$Q$ & Set of SNVs queried.                                                                          \\
$M$ & Set of SNVs masked.                                                                           \\
$L$ & Likelihood Ratio Test (LRT) score.                                                            \\
$\theta$ & LRT threshold used by the attacker.                                                               \\
$K$ & User-specified percentile of individuals in $\bar{D}$ with the lowest LRT scores.              \\
$Z$ & Set of individuals in $D$ for whom privacy is preserved.                                      \\
$\alpha$ & User specified relative cost of flipping to masking.                                    \\
$w$ & User specified relative weight of privacy versus utility.\\
\\[-2ex]
\hline
\label{tab:notation}
\end{tabular}
\end{table}

A single nucleotide variant (SNV) is a position on the genome where the allele present differs across the population. In this study, all SNVs considered are assumed to be bi-allelic, i.e., there are two possible alleles that may be exhibited for a given SNV. Either one of the possible alleles may be considered a reference allele for a given study, and the other is said to be the alternate allele. The fraction of individuals with the alternate allele in a dataset $D$ of $n$ individuals is referred to as the alternate allele frequency (AAF), which we denote by $p_j$ for the $j^{th}$ SNV. The frequency of the alternate allele in a reference population of individuals not in the dataset (we call this $\bar{D}$) is denoted $\bar{p}_j$. Here, we are only concerned with whether or not an individual has an alternate allele, and not whether both alleles at the chosen position are the alternate allele. Therefore, the binary variable $d_{ij}$ denotes whether individual $i$ has at least one alternate allele ($d_{ij}=1$) at position $j$, and $d_{ij}=0$ otherwise. The total number of SNVs in the dataset is denoted $m$.
Let $Q$ refer to a set of $m$ SNV positions that can be queried. Let $\gamma$ be the genomic sequencing error rate (usually on the order of $10^{-6}$). The probability that no individual in $D$ has an alternate allele (equivalently, all individuals have the reference allele) at position $j$ is given by $R^j_n = (1-\bar{p}_j)^{2n}$. Let the summary release be represented by the vector $x$. In the case of Beacons, $x$ is binary, with $x_j=1$ if $\exists~i\in D, d_{ij}=1$ and $x_j=0$ otherwise - indicating the presence or absence of an alternate allele in the dataset. In the case of summary statistics, the release is a vector of AAFs, therefore, $x_j \in [0,1]$. 

\smallskip
\textbf{\textsc{Membership Inference (MI) Attacks}}\newline
The two membership-inference attacks considered in this work are based on likelihood-ratio test (LRT) statistics. These statistics represent the relative likelihood of an individual $i$ being in the dataset $D$ upon which the summary release was computed, to the likelihood that $i$ is in a reference population. An attacker is assumed to have a set of target genomes, for which membership inference is carried out using released summary information, namely Beacon responses or alternate allele frequencies. In the case of Beacon responses, we use the LRT statistic proposed in \cite{raisaro2017addressing}, which in turn extends the attack originally proposed in \cite{shringarpure2015privacy}. The original attack assumed AAFs to be drawn from the Beta distribution, whereas the extended version uses real AAFs instead. The statistic is computed as follows: let $T$ be the set of target individuals. Then, given the vector $x$ of Beacon responses to queries $Q$, the likelihood-ratio test (LRT) score for individual $i\in T$ is:
\begin{align}
	\label{eq:LRT_x}
	L(Q, d_i, x) = \sum_{j\in Q} d_{ij} &\left(x_j \log \frac{1-R^j_n}{1-\gamma R^j_{n-1}} + (1-x_j)\log\frac{R^j_n}{\gamma R^j_{n-1}} \right)
\end{align}

The LRT statistic for alternate allele frequencies is calculated in a similar fashion and was proposed by \cite{sankararaman2009genomic}. Suppose that we have chosen to release AAFs (real-valued vector $x$) for set $Q$ of SNVs. An attacker who is in possession of the genome of a particular individual $i$ can calculate the log-likelihood ratio statistic for $i$ as follows:
\begin{equation}
	L(Q, d_i, x) = \sum_{j\in Q} d_{ij} \log \frac{\bar{p}_j}{x_j} + (1-d_{ij}) \log \frac{1-\bar{p}_j}{1-x_j}.
	\label{eq:LRT_p}
\end{equation}

\smallskip
\textbf{\textsc{Linkage Disequilibrium}}\newline
So far, the attack models presented are based on the assumption that SNVs are independent. However, in practice, an adversary may be able to exploit correlations between SNVs to infer Beacon responses for SNVs that are modified. The linkage disequilibrium coefficient \cite{von2019re} is one formal measure of such correlations. Given two loci (a locus is the position of a certain gene on a chromosome) with alleles $\{A, a\}$ and $\{B, b\}$ respectively, the linkage disequilibrium coefficient is computed as $LD = P(AB) - P(A) P(B)$.
The value of the linkage disequilibrium coefficient (henceforth called LD) lies between $-0.25$ and $0.25$, where larger values are indicative of higher-than-random association of the specified alleles. 

We evaluate the effect of an adversary accounting for correlations as follows. In this situation, for each SNV that is either flipped or masked, the adversary identifies SNVs that are correlated with the target SNV. Due to the fact that computing the linkage disequilibrium coefficient (LD) for all pairs of 1.3 million SNVs is very computationally expensive, we model an adversary who computes correlations with a fixed number of SNVs on either side of the SNV of interest in the dataset. We consider two SNVs to be correlated if their LD is above a certain threshold, $t_{LD}$, limiting our evaluation to highly correlated SNV pairs. Given a flipped or masked SNV $j$, let $N_{LD}(j)$ be the set of SNVs which are positively correlated with $j$. If at least $75\%$ of SNVs in $N_{LD}(j)$ have a \emph{yes} Beacon response, then the Beacon response for SNV $j$ is inferred to be \emph{yes} as well. We show in a subsequent section that only SNVs with \emph{yes} responses are masked/flipped. Having inferred responses for a subset of the flipped/masked SNVs, the adversary now recomputes LRT scores, and makes membership inference claims using the fixed-threshold or adaptive-threshold models as described above.
Observe that this evaluation is worst-case in the sense that we only perform the above correlation attack for SNVs that \emph{we know} have been flipped or masked, rather than \emph{all} SNVs, as would be done in an actual attack.

\smallskip
\textbf{\textsc{DP and the Laplace Mechanism}}\newline
\edited{Differential privacy (DP) is a popular privacy-preserving data-sharing technique based on the principle that 
the distribution of responses computed on two datasets that only differ in one entry should be similar.
} 
Formally, a randomized algorithm $f$ is $\epsilon$-differentially private if for any two datasets $D$ and $D'$ differing in one entry \edited{(i.e., $D'$ omits one record from $D$)},
and for all possible subsets $F$ of the image of $f$, 
\begin{equation}
	\frac{\textrm{P}(f(D))\in F}{\textrm{P}(f(D'))\in F} \le \epsilon
\end{equation}

\textbf{Unbounded Risk}
In practice, the value of $\epsilon$ is used to trade off utility and privacy, with a smaller value of $\epsilon$ corresponding to a greater degree of privacy, at the cost of utility. For aggregate queries such as means over columns, as is the case with AAFs, a simple mechanism to achieve differential privacy is to add random noise sampled from a Laplace distribution to each SNV's AAF. Laplacian noise with a scale of $\Delta g/\epsilon$ where $\Delta g$ is the sensitivity of the function $g$ (in our case, $g$ is the mean for each SNV) satisfies the requirements for $\epsilon$-differential privacy \cite{dwork2006calibrating}. The sensitivity of a function is defined as  $\max||g(D)-g(D')||_1$, where $D$ and $D'$ differ in one entry \edited{(i.e., $D'$ omits one individual from $D$)}.
For a given SNV in a dataset of $n$ individuals, $x_{ij}$ can be either $1$ or $0$.
Therefore, the maximum possible difference between means over columns differing in one entry is $1/n$. As the dataset has $m$ SNVs, the sensitivity, which is the $\ell_1$ norm of the vector of size $n$ with each entry being $1/n$, is $\Delta g = \frac{m}{n}$.
Therefore, adding Laplacian noise centered at $0$, and with a scale of $m/n\epsilon$ satisfies $\epsilon$-differential privacy.

\textbf{Bounded Risk}
While the above measure of sensitivity provides theoretical worst-case privacy guarantees, two genomic sequences rarely - if ever - tend to be completely dissimilar in terms of alternate allele composition. The above measure of sensitivity, owing to the large number of SNVs considered (on the order of 1.3 million), forces the user to choose a very large value of $\epsilon$ (on the order of 100K or above when no SNVs are masked) to retain sufficient practical utility. Such large values of $\epsilon$, in turn, offer lower privacy guarantees. Therefore, we also consider a measure of sensitivity in the average-case scenario, where the numerator consisting of the total number of SNVs, $m$ when calculating sensitivity is replaced by the average number of bits by which a sequence in the dataset differs from each sequence in the reference population. We refer to this as Bounded Sensitivity, and on our data, this measure is an order of magnitude smaller than worst-case sensitivity (on the order of 150K when no SNVs are masked).

\subsection{Model}
\label{sec:model}

\smallskip
\textbf{\textsc{Defending Against MI Attacks}}\newline
In this work, we make a significant departure from existing approaches to preserving the privacy of individuals when sharing genomic summary statisticcs in two ways. First, we explicitly enable trade-offs between privacy (in the sense of protection from membership inference attacks) and utility (in terms of the extent of modification of the summary statistics). Second, we consider two defensive strategies to mitigate the privacy risks presented by MI attacks - namely suppression (masking) of beacon responses, and the addition of noise to query responses or allele frequencies. In the case of Beacon responses, the addition of noise takes the form of falsifying responses to queries (claiming that a particular alternate allele is not present in the dataset, when in fact, it is). This query-flipping approach is standard in much of the prior literature, with various strategies applied to select the subset of SNVs to flip \cite{wan2017controlling,raisaro2017addressing,cho2020privacy,venkatesaramani2021defending,idash16}. By contrast, masking Beacon responses \cite{sankararaman2009genomic} is a less explored strategy. This is likely because it has a considerably larger impact on utility. In the case of allele frequencies, we add real-valued noise to the published frequencies, which is a hallmark of methods based on differential privacy concepts; but we differ from prior literature in that ours is the first work that combines the addition of noise with suppression of SNVs. Formally, let $M$ be the subset of SNVs that are masked or suppressed. 
\edited{We assume that the data recipients only observe $Q\setminus M$, rather than $Q$ and $M$ separately; consequently, the choice of $M$ does not in itself reveal information about the individuals in the dataset.}
Let $\delta$ denote the noise added to SNVs in $Q\setminus M$. In the case of Beacon responses, let $\delta$ be a vector, with $\delta_j=-1$ indicating that the response for SNV $j$ is flipped, and $\delta=0$ indicating otherwise. To flip a query $j$ is to return a response $1-x_j$ for it, whereas masking $j$ implies that it cannot be queried at all. In case of AAF releases, let $\delta$ denote the real-valued noise added to the AAFs $x$ for SNVs in $Q\setminus M$. 

The LRT score for an individual $i$ when set $M$ of SNVs is masked, and noise $\delta$ is added to the remaining release is $L(Q\setminus M, d_i, x+\delta)$. Overloading notation, we use $L_i(M, \delta)$ to refer to either data release model henceforth - only making a distinction where mathematically necessary. Finally, we can also write the prediction threshold for whether individual $i$ is in the dataset $D$ as a function of the defense, $\theta(M, \delta)$. To ensure privacy is preserved for an individual $i\in D$ is to ensure $L_i(M, \delta) - \theta(M, \delta) \ge 0$. Given $M$ and $\delta$, we define $Z(M,\delta) \subseteq D$ to be the set of individuals for whom masking SNVs in $M$ and adding noise $\delta$ to the rest preserves privacy, i.e.,
\begin{equation}
	Z(M, \delta) = \{i\in D~|~ L(M, \delta) - \theta(M,\delta) \ge 0\}
\end{equation}
Our goal is to solve the following \textsc{Summary Stats Privacy Problem (SSPP)}:
\begin{equation} 
	\min_{M, \delta} \alpha ||\delta||_1 + (1-\alpha)|M| - w|Z(M, \delta)|,
\end{equation}
where $\alpha$ denotes the relative cost of adding noise as compared to masking SNVs, and $w$ captures the relative importance of preserving privacy for individuals over the utility of the released summary statistics. Preserving privacy for all individuals often comes at a prohibitive cost to the utility of the released aggregate data, and is not always desirable. This approach allows the data custodian to explicitly trade off privacy and utility through a combination of masking and adding noise in a systematic way. 

We remark that from a mathematical standpoint, a high-magnitude noise $\delta$ is equivalent to suppression, in the sense that we obtain no useful information about the associated SNV either way. Our distinction here takes the perspective of usability: while these may be equivalent information-theoretically, they are not in the way they are perceived. Specifically, adding a very high level of noise amounts to deception, since one appears to be presenting the information, but in fact it has no value; in contrast, masking is transparent in that no information is actually provided. For this reason, we expect that the relative cost of adding high-magnitude noise is considerably higher than the cost of suppression.

Finally, we consider two threat models, differing in the choice of the prediction threshold, $\theta(M, \delta)$. The first model involves an attacker who computes the threshold \emph{a priori}, not accounting for the defense. We also consider an additional \emph{adaptive} attacker model recently introduced in \cite{venkatesaramani2021defending} in the context of genomic Beacon services. In this stronger adversarial model, the attacker attempts to identify individuals in the dataset, \emph{accounting for the defense} by relying on the separation of LRT scores.

\textbf{Fixed Threshold Attacks} A naive attacker, who does not account for our attempts to defend against such attacks, chooses a threshold $\theta$, typically to balance false positive and false negative rates with respect to some synthetic ground truth dataset. A maximum false positive rate is often used to tune $\theta$ \cite{idash16,wan2017controlling}. This may be accomplished by simulating Beacons on other publicly available datasets, genomic data that the adversary otherwise has access to, or data synthesized using knowledge of alternate allele frequencies. The practical implication of this assumption is that we can set the threshold to be a constant, i.e. $\theta(M, \delta) = \theta$, which does not depend on the subset of SNVs masked or the noise added. As a result, $Z(M, \delta) = \{i\in D~|~L(M, \delta)-\theta \ge 0\}$ for this threat model. This threat model is assumed by most prior work \cite{raisaro2017addressing,sankararaman2009genomic,shringarpure2015privacy}. Note that this model does not require the defender to know the threshold $\theta$ used by the attacker; a conservative bound will do.

\textbf{Adaptive Threshold Attacks} On the contrary, the \emph{adaptive} attack model attempts to separate the two populations (individuals in $D$ from those who are not) using the separation between their LRT scores \emph{after} the defense has been implemented. Recall that $\bar{D}$ is a set of reference individuals not in the dataset $D$. Let $\bar{D}^{(K)}\subset\bar{D}$ be a set of $K$ individuals in $\bar{D}$ such that they have the lowest LRT scores. In the \emph{adaptive} attack model, the prediction threshold is calculated as:
\begin{equation}
	\theta(M, \delta) = \frac{1}{K} \sum_{k\in \bar{D}^{(K)}} L_k(M, \delta)
	\label{eq:theta}
\end{equation}

As a result, the set of individuals for whom privacy is protected, $Z(M, \delta) = \{i\in D~|~L_i(M, \delta)-\frac{1}{K} \sum_{k\in \bar{D}^{(K)}} L_k(M, \delta) \ge 0\}$ under the adaptive attacker model.

\section{Results}
\label{sec:exps}
\textsc{\textbf{Experimental Design}}\newline
\textbf{Dataset and Metrics} Our experiments were conducted on a dataset of 1338843 SNVs on Chromosome 10, made available by the 2016 iDash workshop on Privacy and Security \cite{idash16}. The data consists of genomes of 400 individuals for whom summary statistics are to be released (i.e. the set $D$), and 400 individuals who are not part of this group (the set $\bar{D}$. \edited{This dataset was derived from the 1000 Genomes Project \cite{10002015global} and is sufficiently large to demonstrate the scalability of our approach, while using only Chromosome 10 makes it practical to work with in terms of memory footprint.} All experiments were conducted on a PC with an AMD Threadripper 3960X CPU and 128 GB RAM, running Ubuntu 22.04. We measure utility as $100[1-(\alpha||\delta||_1 + (1-\alpha)|M|)/m]\%$, accounting for the relative cost of adding noise to masking SNVs. We define privacy to be the percentage of individuals for whom privacy is preserved under each respective attacker model. 

\textbf{Baselines - SPG-B} We compare our approach (SPG-B) to four state-of-the-art baselines. The first is strategic flipping (SF), where SNVs are flipped in decreasing order of their differential discriminative power as proposed by \cite{wan2017controlling}, followed by a local search. We also compare to a modified version of SF which we call SFM, where the adaptive threshold definition of privacy is used when applicable. The second is random flipping (RF) \cite{raisaro2017addressing}, where unique alleles in the dataset (i.e., only one individual has the allele) are randomly flipped by sampling from a Binomial distribution. The third is a differentially private (DP) mechanism proposed by \cite{cho2020privacy}, which offers plausible deniability for each Beacon response. Fourth, we compare to the marginal-impact greedy (MIG) approach by \cite{venkatesaramani2021defending}, noting that this approach guarantees privacy for all individuals. For each baseline, we consider a variation that selects SNVs as described by the method, but masks the SNVs instead of flipping them. None of the baselines allow us to combine flipping and masking into a single strategy. 

\textbf{Baselines - SPG-R} We compare the performance of the proposed SPG-R algorithm with three baselines: 1) only adding noise using the Laplace mechanism (standard DP), 2) \emph{Masking} only, and 3) the \emph{Linkage} approach proposed by \cite{sankararaman2009genomic}, which greedily selects a subset of SNVs in linkage equilibrium in order of utility up to the maximum allowed power of the LR test. We note that we use the parallel version of SPG-R as detailed in Section~\ref{appx:runtime} of the supplement throughout this work. 

\textsc{\textbf{Fixed Threshold Attacks}}\newline
\textbf{SPG-B} We begin by considering fixed-threshold attacks on Beacons, with $\theta=-250$. For SPG-B, we measure performance in the utility-privacy space by varying the weight $w \in [0.01,10]$. For DP and RF, we vary their respective parameters ($\epsilon$ and probability $p$, respectively). Note that when approaches rely solely on masking SNVs, the only solution for threshold $\theta$ above a certain positive value is to shut the Beacon down (responses which are initially 0 are not masked) - as the maximum LRT score attainable for an individual $i$ is $\theta\ge \sum_{j\in Q_1} d_{ij}B_j$. Therefore, here we show results for a negative value of $\theta$ for which a solution is guaranteed. Results for $\theta=-750$ as well as special cases where all approaches including SPG-B are restricted to either flipping or masking SNVs are similar and provided in Section~\ref{appx:fix_thresh} of the supplement.

\begin{figure}[ht!]
    \centering
        \captionsetup[subfigure]{justification=centering}
        \savebox{\largestimage}{\includegraphics[width=0.45\columnwidth]{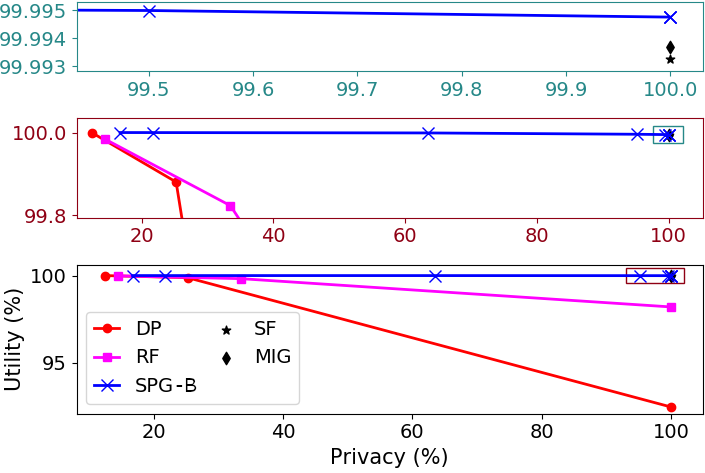}}
        \begin{subfigure}{0.45\columnwidth}
            \usebox{\largestimage}
            \caption{$\theta=-250$, baselines only flip SNVs\\\edited{Zoomed in portions shown in top two subplots.}}
            \label{fig:mdf_flp_fix_n250}
        \end{subfigure}
        \begin{subfigure}{0.45\columnwidth}
            \raisebox{\dimexpr.5\ht\largestimage-.5\height}{
                \includegraphics[width=\textwidth]{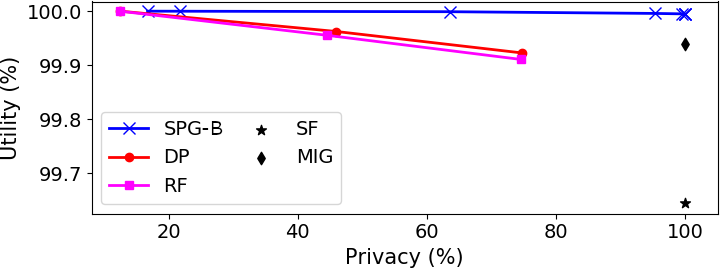}
            }
            \caption{$\theta=-250$, baselines only mask SNVs}
            \label{fig:mdf_msk_fix_n250}
        \end{subfigure}
        \caption{Utility-privacy plots for the fixed threshold attack model for Beacons, compared to baselines.}
        \label{fig:SPG-B_fix}
        \vskip 10pt
\end{figure}

Figs.~\ref{fig:mdf_flp_fix_n250} and ~\ref{fig:mdf_msk_fix_n250} compare the performance of SPG-B to the baseline approaches when a) the baselines only flip SNVs and b) the baselines only mask SNVs, respectively - as none of the baselines allow us to use a combination of the two. The key observation is that the proposed approach Pareto dominates the baselines, as the ability to both flip and mask SNVs provides an additional level of flexibility. The improvement over both DP and RF is particularly substantial in terms of utility. MIG and SF offer slightly lower utility than our approach when the privacy of all individuals is protected, but do not permit solutions that can explicitly trade off utility for privacy.
Fig.~\ref{fig:mdf_msk_fix_n250} tells a similar story, where it can be seen that the difference in utility between SPG-B and MIG increases from about $0.001\%$ (when MIG flipped SNVs) to around $0.05\%$ (in the masking case where $\theta=-250$). This difference in utility corresponds to tens of thousands of more SNVs masked by MIG than SPG-B (as there are over 1.3 million SNVs in the dataset and $\alpha \gg (1-\alpha)$). 

\textbf{SPG-R} In a similar fashion, we compare the parallel variant of SPG-R to the three baselines in the fixed threshold setting. We set $\theta=0$ in these experiments, as this was found to be the threshold that best separated the two populations before the defense was applied. In contrast to Beacons, we explore a wider range of values of $\alpha$, the relative cost of masking to adding noise - as the $\ell_1$ norm of real-valued noise $\delta$ is orders of magnitude smaller on average than it is in Beacons. We vary the weight parameter $w$ in the range $[0.1, 10000]$. For DP, Linkage, and Masking, we choose the privacy-utility point that best optimizes the \textsc{SSPP} objective function for each value of $w$, ensuring fair comparisons. All results involving random noise are averaged over 5 runs. Here, we only present results for unbounded risk; those for bounded risk are similar and provided in Supplement~\ref{appx:bounded}.

We select $\epsilon$ for the Laplace mechanism from $\{10K, 50K, 100K, 500K, 1M, 5M, 10M\}$. While at first, it seems like these values are very large compared to parameters used in practice (on the order of 1-10), this is explained by the fact that the datasets used in practice have nowhere near the number of variables we are dealing with. With over 1.3 million SNVs, using a smaller value of $\epsilon$ would induce a prohibitive amount of noise (refer to Section~\ref{sec:prelims} for details on how noise scales with $m$ and $\epsilon$) that would essentially void the system of any utility.
\begin{figure}[ht!]
    \centering
        \captionsetup[subfigure]{justification=centering}
        \begin{subfigure}[t]{0.45\columnwidth}
            \includegraphics[width=\textwidth]{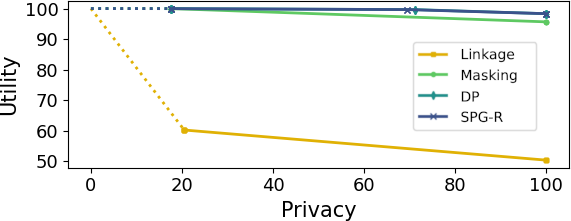}
            \caption{$\theta=0, \alpha=0.5$}
            \label{fig:ssppg_fix_ub_alpha_0.5}
        \end{subfigure}
        \begin{subfigure}[t]{0.45\columnwidth}
            \includegraphics[width=\textwidth]{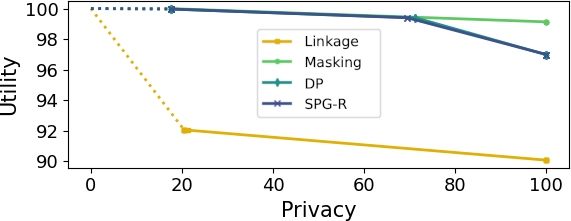}
            \caption{$\theta=0, \alpha=0.9$}
            \label{fig:ssppg_fix_ub_alpha_0.9}
        \end{subfigure}
        \caption{Utility-privacy plots for the fixed threshold attack model for AAF releases, compared to baselines.}
        \label{fig:ssppg_fix}
        \vskip 10pt
\end{figure}

Fig.~\ref{fig:ssppg_fix} considers $\alpha=0.5$ and $\alpha=0.9$. When $\alpha=0.5$, SPG-R outperforms the Linkage and Masking methods, and has comparable performance to DP. When $\alpha=0.9$, on the other hand, Masking  outperforms other methods (since it has far lower cost than adding noise), with both SPG-R and DP having comparable performance (to Masking, and one another).

\textsc{\textbf{Adaptive Threshold Attacks}}\newline
Next, we consider the adaptive threshold attacker, with $K=10$ (here, this refers to the $K$ lowest percentile in terms of LRT scores). Yet again, we present results by varying weight $w$ as before, and the respective parameters for the considered baselines. 

\textbf{SPG-B} Fig.~\ref{fig:SPG-B_ada} presents results in this setting, where we observe that again SPG-B Pareto dominates the baselines, and is comparable to MIG, but now by a much larger margin than in the fixed threshold setting. While SF has better utility than the remaining baselines, it offers very low privacy. If we restrict the baselines to masking only, Fig.~\ref{fig:mdf_msk_ada_10} shows that SPG-B once again outperforms all baselines. The reason is evident from the plot itself - a masking-only strategy is insufficient to guarantee privacy against an adaptive-threshold attacker. Results for $K=5$, as well as settings where all approaches including SPG-B are restricted to flipping or masking, are similar and provided in Section~\ref{appx:ada_thresh} of the supplement.

\begin{figure}[ht!]
    \centering
        \captionsetup[subfigure]{justification=centering}
        \begin{subfigure}[t]{0.45\columnwidth}
            \includegraphics[width=\textwidth]{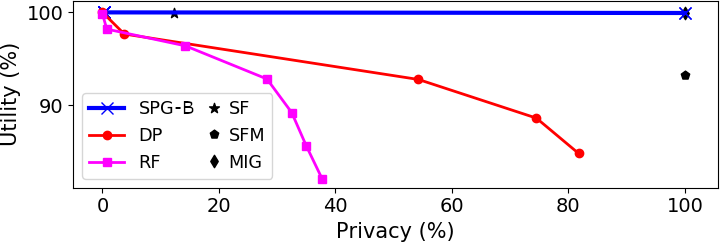}
            \caption{$K=10$, baselines only flip SNVs}
            \label{fig:mdf_flp_ada_10}
        \end{subfigure}
        \begin{subfigure}[t]{0.45\columnwidth}
            \includegraphics[width=\textwidth]{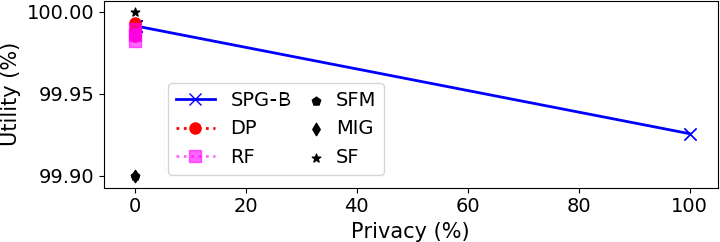}
            \caption{$K=10$, baselines only mask SNVs}
            \label{fig:mdf_msk_ada_10}
        \end{subfigure}
        \caption{Utility-privacy plots for the adaptive threshold attack model for Beacons, compared to baselines.}
        \label{fig:SPG-B_ada}
        \vskip 10pt
\end{figure}

\textbf{SPG-R} Fig.~\ref{fig:ssppg_ada} presents results for $K=10$, for $\alpha=0.75$ and $\alpha=0.9$. In contrast to the fixed threshold setting, here SPG-R dominates all baselines. When $\alpha=0.9$, i.e., adding noise is relatively expensive, Masking produces similar performance when $K=5$ (see supplement, Section~\ref{appx:ada_thresh}); however, once $K$ is increased to $10$, the problem can no longer be solved using Masking alone, and SPG-R dominates it by a significant margin. DP and Linkage offer much lower utility on average compared to only masking SNVs or using SPG-R.

\begin{figure}[ht!]
    \centering
        \captionsetup[subfigure]{justification=centering}
        \begin{subfigure}[t]{0.45\columnwidth}
            \includegraphics[width=\textwidth]{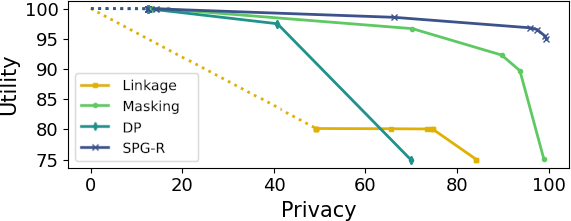}
            \caption{$K=10, \alpha=0.75$}
            \label{fig:ssppg_fix_alpha_0.75}
        \end{subfigure}
        \begin{subfigure}[t]{0.45\columnwidth}
            \includegraphics[width=\textwidth]{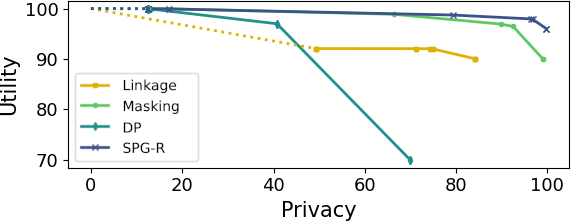}
            \caption{$K=10, \alpha=0.9$}
            \label{fig:ssppg_fix_alpha_0.9}
        \end{subfigure}
        \caption{Utility-privacy plots for the adaptive threshold attack model for AAF releases, compared to baselines.}
        \label{fig:ssppg_ada}
        \vskip 10pt
\end{figure}

\smallskip
\textsc{\textbf{Linkage Disequilibrium}}\newline
Finally, we consider attacks on Beacons which leverage correlations between SNVs. We assume that a pair of SNVs is correlated if their LD-coefficient is above $0.2$. LD is measured within a span of $250$ SNVs on either side of each target SNV. The attack was found to have no impact on DP and RF, so these are omitted from the following plots. Fig.~\ref{fig:LD_fix} presents results in the fixed-threshold setting. Fig.~\ref{fig:LD_fix_flip_1000} shows that the attack has a small impact on the privacy of MIG, and a significant impact on SF and SPG-B - limiting SF to $77-80\%$, and SPG-B to around $75\%$ when baselines flip SNVs for $\theta=1000$. When baselines mask SNVs, the correlation attack has no impact on SF and MIG, but reduces the maximum privacy achieved by SPG-B to around $73\%$ when $\theta=-250$. In both cases, SPG-LD successfully defends against the correlation attack, achieving $100\%$ privacy for large values of $w$, while dominating the baselines. 

\begin{figure}[ht!]
    \centering
    \captionsetup[subfigure]{justification=centering}
    \begin{subfigure}[t]{0.45\columnwidth}
        \includegraphics[width=\textwidth]{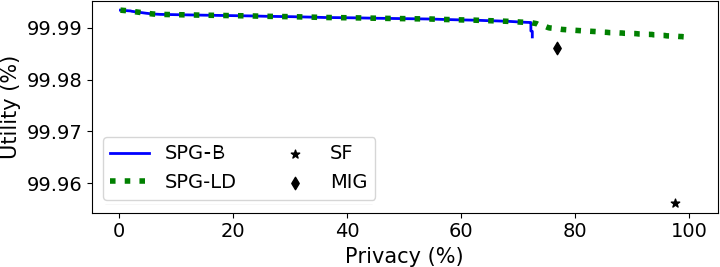}
        \caption{$\theta=1000$, baselines only flip SNVs}
        \label{fig:LD_fix_flip_1000}
    \end{subfigure}
    \begin{subfigure}[t]{0.45\columnwidth}
        \includegraphics[width=\textwidth]{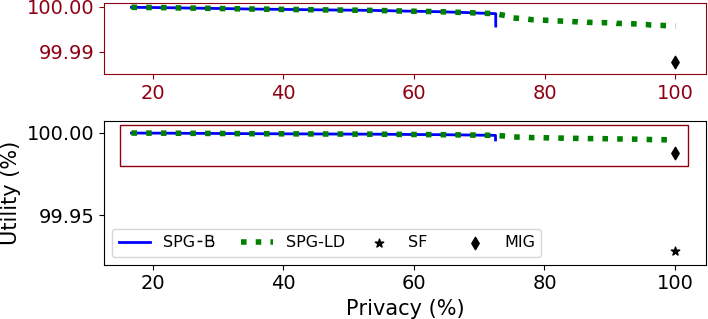}
        \caption{$\theta=-250$, baselines only mask SNVs}
        \label{fig:LD_fix_mask_n250}
    \end{subfigure}
    \caption{Fixed threshold attack model, where the attacker leverages correlation data, and baselines only flip SNVs.}
    \label{fig:LD_fix}
    \vskip 10pt
\end{figure}

Fig.~\ref{fig:LD_ada} presents results in the adaptive threshold case, when $K=5$ and baselines flip SNVs. As before, the attack affects SPG-B, SF and MIG, though to a greater extent in this setting when compared to the fixed threshold model. The privacy achieved by SF drops to about $22\%$ and that of MIG drops to around $80\%$. The privacy achieved by SFM is unaffected by the attack, however SFM yields much lower utility than our proposed methods. The privacy achieved by SPG-B is reduced to about $63\%$ when the attacker uses correlations. The modified approach, SPG-LD, dominates all approaches in terms of utility. In addition, it raises the privacy to about $88\%$. However, it fails to achieve privacy for all individuals - even with very large values of $w$. None of the baselines preserve privacy for any individuals solely by masking SNVs. Therefore, the performance of SPG-LD versus SPG-B is the same as in Fig.~\ref{fig:LD_ada}, such that we do not present new results for the adaptive threshold setting. For additional experiments with relaxed constraints on the choice of $M$, see Supplement~\ref{appx:LD}.
\begin{figure}[ht!]
    \centering
    \captionsetup[subfigure]{justification=centering}
    \begin{subfigure}[t]{0.45\columnwidth}
        \includegraphics[width=\textwidth]{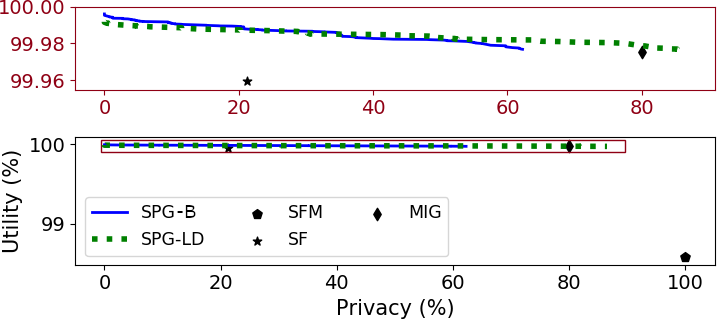}
        \caption{$K=5$, baselines only flip SNVs}
        \label{fig:LD_ada_flip_5}
    \end{subfigure}
    \caption{Adaptive threshold attack model, where the attacker leverages correlation data, baselines flip SNVs.}
    \label{fig:LD_ada}
    \vskip 10pt
\end{figure}
\section{Methods}
\label{sec:methods}

We now present our approach to solving \textsc{SSPP}, under the two threat models discussed above. Note that the noise $\delta$ that is added to summary statistics is qualitatively different in Beacons as compared to AAF summary releases. Recall that while in the former, $\delta$ is additive noise which codifies whether or not SNVs are flipped, in the latter case of AAFs, $\delta$ is real-valued. As a result, the two scenarios yield structurally different optimization problems, but follow the same general framework as outlined in Section~\ref{sec:model}. In both cases, we combine the addition of noise with selective suppression of a subset of SNVs.

We begin by rewriting the LRT scores for individual $i$ as follows. Let $Q_1\subseteq S$ be the subset of SNVs for which Beacon response $x_j=1$, and $Q_0\subseteq S$ be the subset where $x_j=0$. Then, the LRT score for individual $i$ can be written as
\begin{equation}
	\label{LRT_AB_x}
	L(Q, d_i, x) = \sum_{j\in Q_1} d_{ij} A_j + \sum_{j\in Q_0} d_{ij} B_j
\end{equation}

where $A_j = \log \frac{1-R^j_n}{1-\gamma R^j_{n-1}}$ and $B_j = \log\frac{R^j_n}{\gamma R^j_n}$. In case of AAFs, let $A(x_j) = \log \frac{\bar{p}_j}{x_j}, \textrm{ and } B(x_j) = \log \frac{1-\bar{p}_j}{1-x_j}$. Note that in this scenario, $A$ and $B$ are functions of $x_j$, instead of constants for each $j$ as in the beacon service. Then the LRT score can be rewritten as:
\begin{equation}
	\label{LRT_AB_p}
	L(Q, d_i, x) = \sum_j d_{ij} A(x_j) + (1-d_{ij}) B(x_j)
\end{equation}

We note that in the case of Beacons, following \cite{venkatesaramani2021defending}, we assume that the alternate allele is the minor allele at a given position $j$. This, in turn, allows us to leverage a bound on the genomic sequencing error $\gamma$ to ensure our solution approach never violates privacy previously achieved using an iterative process, as we shortly explain. On the other hand, for AAF releases, we only assume that AAFs are bound by $[0.0001, 0.9999]$ in order to prevent division by zero, and any SNV may be masked in case of AAF summary releases. It was shown in \cite{venkatesaramani2021defending} that flipping Beacon responses $x_j$ from 0 to 1 is counterproductive to defending against LRT-based attacks. We now make an analogous observation for masking queries where $x_j = 0$ for both Beacons and AAF summary statistics. 

\begin{proposition}
\label{prop:B_j_ge_0}
In a Beacon service, \edited{for genomic sequencing error} $\gamma<0.25$, $B_j>0$.
\end{proposition}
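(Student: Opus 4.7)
The plan is to unfold the definition of $B_j$ in terms of the reference minor allele frequencies and reduce the claim $B_j>0$ to a simple polynomial inequality in $\bar p_j$, then invoke the standing assumption that the alternate allele is the minor allele at locus $j$.

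First I would rewrite $B_j=\log\frac{R^j_n}{\gamma R^j_{n-1}}$ using $R^j_n=(1-\bar p_j)^{2n}$ (and the analogous formula for $R^j_{n-1}$). The ratio collapses to $\frac{R^j_n}{R^j_{n-1}}=(1-\bar p_j)^{2}$, so that $B_j=\log\frac{(1-\bar p_j)^{2}}{\gamma}$. Hence $B_j>0$ is equivalent to the inequality
\begin{equation*}
(1-\bar p_j)^{2}>\gamma.
\end{equation*}

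Next I would use the assumption stated just before the proposition: in the Beacon setting the authors take the alternate allele to be the minor allele, so $\bar p_j\le \tfrac12$. This forces $1-\bar p_j\ge \tfrac12$ and therefore $(1-\bar p_j)^2\ge \tfrac14=0.25$. Combined with the hypothesis $\gamma<0.25$, this yields $(1-\bar p_j)^2\ge 0.25>\gamma$, which is exactly the inequality needed, and the argument is complete.

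There is no real obstacle here beyond keeping the two-copies-per-individual bookkeeping (the factor of $2n$ in the exponent) straight and explicitly invoking the minor-allele convention. The crucial role of the bound $\gamma<0.25$ is precisely to match the worst-case value of $(1-\bar p_j)^2$ attained when $\bar p_j=\tfrac12$; any tighter bound on $\bar p_j$ would correspondingly relax the requirement on $\gamma$. I would close by noting that in practice $\gamma$ is on the order of $10^{-6}$ and the minor-allele frequency is typically well below $\tfrac12$, so the inequality holds with substantial slack; this also explains why the authors can safely use $B_j>0$ in their subsequent monotonicity arguments about the effect of masking SNVs with $x_j=0$.
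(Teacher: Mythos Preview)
Your proof is correct and follows essentially the same route as the paper: both arguments reduce $B_j>0$ to $(1-\bar p_j)^2>\gamma$ via $R^j_n/R^j_{n-1}=(1-\bar p_j)^2$, then invoke the minor-allele convention $\bar p_j\le\tfrac12$ together with $\gamma<0.25$. The only cosmetic difference is that the paper writes $\bar p_j<0.5$ strictly, whereas you allow equality; since the hypothesis on $\gamma$ is strict, your version is equally valid.
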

\begin{proof}
Let $\bar{p}_j$ be the AAF for SNV $j$ in the population. \edited{Recall that $R^j_n = (1-\bar{p}_j)^{2n}$}. As $\bar{p}_j<0.5~\forall j$, $\frac{R^j_n}{R^j_{n-1}} = (1-\bar{p}_j)^2 \ge 0.25$. Since $\gamma<0.25$, $\frac{R^j_n}{\gamma R^j_{n-1}} > 1$, and consequently $B_j = \log\frac{R^j_n}{\gamma R^j_{n-1}} > 0$.
\end{proof}

\begin{proposition}\label{prop:flip_1_to_0}
	Suppose \edited{Beacon response} $x_j=0$ for SNV $j$ given Beacon dataset $D$. Then masking the SNV can never increase the LRT score for an individual $i\in D$, provided $\gamma<\frac{R^j_n}{R^j_{n-1}}~\forall j$.
\end{proposition}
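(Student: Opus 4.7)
The plan is to reduce the claim to a direct comparison between the LRT score before and after masking SNV $j$, using the decomposition in Eq.~\ref{LRT_AB_x}. Since we are told $x_j=0$, SNV $j$ lies in the set $Q_0$, so its only contribution to $L(Q, d_i, x)$ is the term $d_{ij} B_j$. Masking SNV $j$ simply deletes it from the sum, giving
\begin{equation*}
L(Q, d_i, x) - L(Q\setminus\{j\}, d_i, x) = d_{ij} B_j.
\end{equation*}
It therefore suffices to show this quantity is non-negative for every $i \in D$.

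Next I would argue the sign of $d_{ij} B_j$ case by case. When $d_{ij}=0$ the difference is zero, so masking has no effect. When $d_{ij}=1$ the claim reduces to $B_j \ge 0$, and this is exactly where the hypothesis $\gamma < R^j_n / R^j_{n-1}$ is used: rearranging gives $R^j_n / (\gamma R^j_{n-1}) > 1$, and taking logs yields $B_j = \log [ R^j_n / (\gamma R^j_{n-1}) ] > 0$. Combining the two cases gives $d_{ij} B_j \ge 0$, so masking weakly decreases the LRT score for every individual in $D$, as required.

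The main ``obstacle'' here is really just bookkeeping: there is no analytic difficulty, only the need to identify that the hypothesis on $\gamma$ is precisely what guarantees $B_j \ge 0$. This is a slight sharpening of Proposition~\ref{prop:B_j_ge_0}, which used the coarser bound $\gamma < 0.25$; the SNV-specific bound $\gamma < R^j_n / R^j_{n-1}$ is the exact condition needed so that the argument works uniformly in $j$. Once $B_j > 0$ is established, the conclusion follows immediately from the additive structure of the LRT in Eq.~\ref{LRT_AB_x}.
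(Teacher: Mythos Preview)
Your proof is correct and follows essentially the same route as the paper: a case split on $d_{ij}$, with the $d_{ij}=1$ case reducing to $B_j>0$ via the hypothesis on $\gamma$. The only cosmetic difference is that the paper cites Proposition~\ref{prop:B_j_ge_0} (the coarser $\gamma<0.25$ bound) for this step, whereas you argue $B_j>0$ directly from the sharper SNV-specific hypothesis $\gamma<R^j_n/R^j_{n-1}$ stated in the proposition itself---which is arguably cleaner.
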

\begin{proof}
	Consider SNV $j$ and an individual $i\in D$. If $d_{ij}=0$ (i.e., the individual does not have a minor allele at position $j$), masking the SNV makes no difference to the LRT score (contribution of $j$ to LRT score is $0$ when $d_{ij}=0$, refer to Eq.~\ref{eq:LRT_x} for details). However, when $d_{ij}=1$, suppressing Beacon response $x_j$ changes the contribution of query $j$ to the LRT score from $\log\frac{R^j_n}{\gamma R^j_{n-1}}$ to $0$. Based on Proposition~\ref{prop:B_j_ge_0}, it can be seen that  $\log\frac{R^j_n}{\gamma R^j_{n-1}} > 0$. Thus, if SNV $j$ is masked the LRT score can only decrease.
\end{proof}

\textsc{\textbf{Masking SNVs}}\newline 
We begin by considering the impact of masking a single SNV on the LRT score. Let $S$ be the set of all SNVs, and $M\subseteq S$ be the subset of SNVs masked. Let $\Delta^M_{ij}$ represent the marginal contribution of masking SNV $j$ on the LRT score for individual $i$. In case of the Beacon service, masking a SNV $j$ changes its LRT score contribution from $d_{ij} A_j$ to $0$, as can be observed from Equation~\ref{eq:LRT_x}. Recall that we only mask SNVs where $x_j=1$, therefore, if the individual does not have the alternate allele (i.e., $d_{ij}=0$), masking the SNV makes no difference. Therefore, for Beacons, $\Delta^M_{ij} = -d_{ij}A_j$. 

Similarly, in case of an AAF summary release, masking an SNV $j$ changes its LRT contribution from $d_{ij} A(x_j) + (1-d_{ij}) B(x_j)$ to $0$, as we can observe from Equation~\ref{eq:LRT_p}. Note that in this case, $\Delta^M_{ij}$ is also a function of the AAF $x_j$, and therefore, as real-valued noise may be added to SNV $j$ as part of our approach before the SNV is masked, $\Delta^M_{ij} = -d_{ij} A(x_j + \delta_j) - (1-d_{ij}) B(x_j + \delta_j)$. On the contrary, we assume that the subsets of SNVs flipped and masked in the case of Beacons are disjoint. 

\smallskip
\textsc{\textbf{Adding Noise to Statistics}}\newline 
Next, we consider the addition of noise to the published statistics, for SNVs that are not masked. Let $\delta$ denote additive noise. In case of the Beacon services, let $\delta_j=-1$ indicate that SNV $j$ is flipped, i.e. the Beacon response for SNV $j$ changes from $1$ to $0$. Note that following the observation in \cite{venkatesaramani2021defending}, we only flip SNVs where initially $x_j=1$. The marginal impact of flipping beacon response for SNV $j$ on the LRT score for individual $i$ is $\Delta^F_{ij} = d_{ij}(B_j-A_j)$, as we can observe from Equation~\ref{eq:LRT_x}. In case of AAFs, we use the Laplace mechanism defined in Section~\ref{sec:prelims} to add real valued noise. Thus in this case, $\delta_j \in [0,1]$, and frequencies after the addition of noise are clipped to ensure they are still in the range $[0.0001, 0.9999]$. In both Beacons and AAFs, the $\ell_1$ norm of $\delta$ quantifies the total amount of noise added to the summary statistics.

\smallskip
\textsc{\textbf{Fixed Threshold Attacks}}\newline
We begin by presenting our solution for the fixed-threshold attack model, where privacy is said to be preserved for an individual $i$ when their LRT score calculated after suppressing set $M$ of SNVs and adding noise $\delta$ lies above a constant prediction threshold $\theta$, specified exogenously.

Let $z_i\in \{0,1\}$ be a binary variable corresponding to individual $i$, where $z_i=1$ when privacy is preserved for $i$, and $z_i=0$ otherwise, and define $y_j=1$ if SNV $j$ is masked (i.e., $j\in M$), and $y_j=0$ otherwise. Then the following optimization problem optimally solves \textsc{SSPP} for the fixed-threshold attacker:
\begin{align}
	\min_{\delta, y, z} \alpha ||\delta||_1 &+ \sum_j (1-\alpha)y_j - w \sum_i\mathclap{z_i} \nonumber\\ 
	 \;&\textrm{subject to:}\nonumber \\ 
	\big(L_i(M, \delta) &- \theta\big)z_i \le 0~\forall~i\in D  \label{eq:ILP_MIP}\\
    y \in\{0,1\}^m, z &\in\{0,1\}^n, \delta \in \begin{cases} \{-1,0\}^m, & \text{Beacons} \\ \mathbb{R}^{|Q\setminus M|}, & \text{AAFs}   \end{cases} \nonumber
\end{align}

In case of Beacons, $\delta$ is an integer vector, with entries being either $-1$ or $0$, and the above optimization problem assumes the form of an integer linear program (ILP). While the ILP optimally solves \textsc{SSPP}, it has an exponential worst-case running time with $\mathcal{O}(3^m)$ possible solutions (each SNV in $Q_1$ can be flipped, masked, or reported truthfully) which poses significant scalability challenges with larger populations over millions of SNVs. In case of AAFs, $\delta$ is real-valued, and thus the above problem becomes a mixed-integer program (MIP). Much like the ILP, the MIP has difficulty scaling to large problem instances with over a million SNVs. To address these limitations, we now introduce heuristic algorithms which approximately solve \textsc{SSPP} for Beacons and AAFs.

\textbf{Heuristic - Beacons}  We now introduce a simple greedy algorithm to compute approximate solutions to \textsc{SSPP} for Beacon services. The driving idea behind our greedy heuristic is as follows: at each iteration, we choose a SNV for which flipping or masking achieves the highest average marginal contribution per unit cost (of flipping or masking the SNV) to the LRT scores for individuals in the Beacon. For each individual, let $P_i$ be the set of SNVs for which the Beacon response $x_j=1$, and the individual's genome has the associated minor allele; i.e., $d_{ij}=1$. For $j\in P_i$, $\Delta^F_{ij}$ and $\Delta^M_{ij}$ are independent of $i$. Let $\Delta^F_j = (B_j-A_j)$ and $\Delta^M_j=-A_j$. Then, $\Delta^F_{ij}=d_{ij}\Delta^F_j$ and $\Delta^M_{ij}=d_{ij}\Delta^M_j$. For a chosen query $j$, and a subset of individuals $P\subseteq B$, let $T_j = \{i\in P | j\in P_i\}$ which is the set of individuals for which $j\in P_i$. The average marginal contribution of flipping the query response to SNV $j$ per unit cost is:
	\begin{equation}
	\label{eq:marginal_flipping}
		\bar{\Delta}^F_j(P) = \frac{|T_j|\Delta^F_j}{\alpha |P|}.
	\end{equation}
    Similarly, the average marginal contribution of masking a SNV $j$ is:
    \begin{equation}
	\label{eq:marginal_masking}
		\bar{\Delta}^M_j(P) = \frac{|T_j|\Delta^M_j}{(1-\alpha)|P|}.
	\end{equation}
	
	At each iteration, we calculate both $\bar{\Delta}^F_j(P)$ and $\bar{\Delta}^M_j(P)$ for every SNV $j$, and either flip or mask the SNV with the highest overall contribution, depending on whether flipping or masking led to it scoring the highest. The number of individuals for whom we thereby guarantee privacy is non-decreasing through each iteration of this algorithm, since flipping or masking SNVs for which $x_j=1$ can only increase LRT scores (see Proposition~\ref{prop:flip_1_to_0}). Each time privacy is assured for at least one additional individual, we compare this privacy-utility point to the current best solution (as measured by the objective function in Equation~\ref{eq:ILP_MIP}), and update it if it improves the objective. We also update $P$ to be the set of individuals for whom privacy is not yet assured. The algorithm iterates until privacy is protected for all individuals in the Beacon, or we cannot flip or mask any more SNVs, at which point we return the overall best solution. This idea is formalized in Algorithm~\ref{algo:SPG}, which we call \textsc{Soft-Privacy-Greedy-Binary} (SPG-B). For runtime analysis, refer Section~\ref{appx:spg_runtime} in the supplement.
 
\begin{algorithm}[ht!]
    \DontPrintSemicolon
    \caption{\textsc{Soft-Privacy-Greedy-Binary (SPG-B)}}
    \KwInput{A set of individuals $i\in D$, subset $P_i$ and LRT score $\eta_i$ for each individual, marginal contributions of flipping/masking $\Delta^F_j$ and $\Delta^M_j$ for each SNV, a set of queries $Q$ and threshold $\theta$, weight parameter $w$, relative cost of flipping $\alpha$.}
    \KwOutput{Subset of queries $F\subseteq S$ to flip, subset of queries $M\subseteq S$ to mask.}
    \KwInit{
	$F=\emptyset$, $M=\emptyset$, $C=\emptyset$, $F_t=\emptyset$, $M_t=\emptyset$, $U=\infty$ \;
    }
	\While{$(D\setminus C)\ne \emptyset$}{
		Set $l=0$, $N=-1$, $d=-1$. \;
		\For{$j\in(Q\setminus (F\cup M))$}{
			Set $T_j=\{i\in (D\setminus C)| j\in P_i\}$ \;
			Set $\bar{\Delta}^F_j = \Delta^F_j \frac{|T_j|}{\alpha|D\setminus C|}$ \;
			Set $\bar{\Delta}^M_j = \Delta^M_j \frac{|T_j|}{(1-\alpha)|D\setminus C|}$ \;
			\If{$\bar{\Delta}^F_j > N$}{
				Set $N=\bar{\Delta}^F_j$ \;
				Set $l=j$ \;
				Set $d=0$ \;
			}
			\If{$\bar{\Delta}^M_j > N$}{
				Set $N=\bar{\Delta}^M_j$ \;
				Set $l=j$ \;
				Set $d=1$ \;
			}
		}
		\If{$d==0$}{
			Set $F_t=F_t \cup l$ \;
		}
		\If{$d==1$}{
			Set $M_t=M_t \cup l$ \;
		}

		\For{$i \in (D\setminus C)$}{
			\If{$\sum_{j\in F}\Delta^F_{ij} + \sum_{j\in M}\Delta^M_{ij} + \eta_i \ge \theta$}{
				Set $C=C\cup i$ \;
			}
		}
		$U_t=\alpha |F_t| + (1-\alpha) |M_t| - w |C|$

		\If{$U_t \le U$}{
			Set $U = U_t$ \;
			Set $F=F_t$ \;
			Set $M=M_t$ \;
		}
	}
	\KwRet $F, M$
	\label{algo:SPG}
\end{algorithm}

\begin{algorithm}[t!]
	\caption{Soft-Privacy-Greedy-Real (SPG-R)}
	\label{algo:SPG-R}
        \DontPrintSemicolon
	
	\KwInput{A set of individuals $i\in D$, marginal contributions of masking $\bar{\Delta}^M_j$ for each SNV, a prediction threshold $\theta$, weight parameter $w$, relative cost of adding noise $\alpha$, number of SNVs to mask per iteration $t$, set $E$ of candidate DP parameters, AAFs $x$ for individuals in $D$ and $\bar{p}$ for individuals in reference set $\bar{D}$.}
	\KwOutput{Subset of SNVs $M\subseteq S$ to mask, real valued noise vector $\delta$.}
	\KwInit{$M=\emptyset$, $C=\emptyset$, $U=\infty$, $\delta=0$, $c=0$, $\Delta^S = $\textsc{Sort}($\bar{\Delta}^M_j$), $M_t=\emptyset$}
        \SetKwFunction{FMain}{GetLR}
        \SetKwProg{Fn}{Function}{:}{}
        \Fn{\FMain{$d_i, \delta, M$}}{
		\KwRet $\sum_{j\in S\setminus M} d_{ij} \log \frac{\bar{p}_j}{x_j+\delta_j} + $ $(1-d_{ij}) \log \frac{1-\bar{p}_j}{1-(x_j+\delta)}$ \;
        }
	\While{$S\setminus M_t \ne \emptyset$}{
		Set $U_t=\infty, \delta_t=0, C_t=\emptyset$\;
		\For{$\epsilon \in E$}{
			Set $\delta_\epsilon = $ Laplacian($0, \frac{|Q\setminus M_t|}{n\epsilon}$) \;
			Set $C_\epsilon=\emptyset$ \;
			\For{$i\in D$}{
				\If{\textsc{GetLR}$(d_i, \delta_\epsilon, M_t)\le \theta$}{
					Set $C_\epsilon = C_\epsilon \cup i$ \;
				}
			}
			Set $U_\epsilon = \alpha||\delta_\epsilon||_1 + (1-\alpha)|M_t| - w |C_\epsilon|$ \;
			\If{$U_\epsilon \le U_t$}{
				Set $U_t = U_\epsilon$ \;
				Set $\delta_t = \delta_\epsilon$ \;
				Set $C_t = C_\epsilon$ \;
			}
		}
		\If{$U_t \le U$}{
			Set $U=U_t$ \;
			Set $\delta=\delta_t$ \;
			Set $M=M_t$ \;
		}
		Set $ct=1$ \;
		\While{$ct\le t$}{
			Set $M_t=M_t\cup \Delta^S_c$ \;
			Set $c=c+1$ \;
			Set $ct=ct+1$ \;
		}
		Set $U_t = \alpha||\delta_t||_1 + (1-\alpha)|M_t| - w |C_t|$ \;
		\If{$U_t \le U$}{
			Set $U=U_t$ \;
			Set $\delta=\delta_t$ \;
			Set $M=M_t$ \;
		}
	}
	\KwRet $M, \delta$ \;
\end{algorithm}

\textbf{Heuristic - AAFs} We now introduce an alternating optimization algorithm which approximately solves \textsc{SSPP}, combining masking of alternate allele frequencies for a subset of SNVs with adding Laplacian noise to the rest. 

The outline of the algorithm is as follows. 
At each step, we alternate between adding noise to SNVs that have not yet been masked, such that it minimizes the objective in Equation~\eqref{eq:ILP_MIP}, and masking SNVs in order of their average marginal contribution to the population's LR scores, in an attempt to increase utility. The average marginal contribution of masking a SNV $j$ is simply the mean over the marginal contributions for all individuals, i.e.,
\(	\bar{\Delta}^M_j = \frac{1}{n} \sum_{i\in D} \Delta^M_{ij}. \)

Since in the fixed-threshold model, we require that each individual's score lie \emph{above} a specified threshold, we aim to increase LRT scores. Thus, we rank SNVs to mask in order of their average marginal contributions. For $\alpha \gg (1-\alpha)$, masking is preferred over adding noise---in essence, sharing a smaller subset of cleaner data, as opposed to sharing all SNVs with high obfuscation. Masking SNVs should necessarily continue to minimize the objective function until privacy is violated for an individual previously covered as a result of encountering large positive values of $\Delta_j$. However, continuing to mask in this manner is suboptimal because it does not allow us to explore possible intermediate solutions; for example, masking fewer SNVs and adding slightly higher noise may provide a better privacy-utility trade-off in many cases. 

Our algorithm proceeds as follows. We first add noise to all SNVs by sampling from the Laplace distribution such that it best optimizes the objective in Equation~\eqref{eq:ILP_MIP}. This may be done in one of two ways: 1) computing the objective over a pre-selected set of values of $\epsilon$, or 2) a binary search over possible values of $\epsilon$, assuming convergence when the difference between two considered values of $\epsilon$ in the search is sufficiently small. While the latter is more systematic, it is also slower and performs poorly (refer Section~\ref{appx:runtime}), while the former approach is fully parallelizable and produces good results, as long as the choices for the set of candidate $\epsilon$ values are reasonable. We therefore use the first approach in the rest of this work.

Having added Laplacian noise, we then mask a set of $t$ SNVs in the order of their average marginal contribution to LR scores, calculated after adding noise. The value of $t$ that we use is chosen to balance computation time and the near-optimality of the solution. Specifically, a smaller value of $t$ implies a larger number of candidate solutions explored, but with a runtime inversely proportional to $t$. 

At the end of this cycle, we repeat the noise-addition and masking processes in an alternating fashion, each time adding noise to the SNVs that remain unmasked with the scale of the Laplacian distribution accordingly adjusted. Algorithm~\ref{algo:SPG-R} - which we call the \textsc{Soft-Privacy-Greedy-Real} (SPG-R) approach - provides full details about the implementation of our method. In Section~\ref{appx:runtime} in the supplement, we present an alternate implementation of the SPG-R algorithm which leverages problem structure to reduce redundant computations, and parallel processing in order to significantly reduce runtime.

\smallskip
\textsc{\textbf{Adaptive Threshold Attacks}}\newline
In the \emph{adaptive} threshold scenario, the goal is to ensure that the LRT scores of individuals in $D$ and $\bar{D}$ (i.e., those not in the dataset) remain sufficiently well-mixed. Recall from Section~\ref{sec:prelims} that the prediction threshold in this setting is $\theta(M, \delta) = \frac{1}{K} \sum_{k\in \bar{D}^{(K)}} L_i(M, \delta)$, which is the average LRT score for a set of $K$ individuals in $\bar{D}$ with the lowest LRT scores. Then similar to Eq.~\ref{eq:ILP_MIP}, we can formulate this as an optimization problem in the context of adaptive attacks.

\begin{align}
    \centering
	\min_{\delta, y\in\{0,1\}^m, z\in\{0,1\}^n} \alpha ||\delta||_1& + \sum_j (1-\alpha)y_j - w \sum_i\mathclap{z_i} \nonumber\\ 
	 \;&\textrm{subject to:}\nonumber \\ 
	\big(L_i(M, \delta) - \theta(M, &\delta) \big)z_i \le 0~\forall~i\in D  \label{eq:MIP}\\
        \delta \in \{-1,0\}^m \text{(Beacons)}&;\quad \delta \in \mathbb{R}^m \text{(AAFs)}    \nonumber
\end{align}

This structure allows us to extend our algorithms used for the fixed-threshold scenario, with one change - instead of sorting SNVs by $\Delta^M_j$ or $\Delta^F_j$, we now sort the SNVs by $\Delta^{M(K)}_j = \Delta^M_j - \frac{1}{K} \sum_{k\in \bar{D}^{(K)}} \Delta^M_{kj}$ and $\Delta^{F(K)}_j = \Delta^F_j - \frac{1}{K} \sum_{k\in \bar{D}^{(K)}} \Delta^F_{kj}$ respectively. In the adaptive threshold model, with Beacons, $\Delta^{M(K)}_j$ and $\Delta^{F(K)}_j$ may be negative, and may be detrimental to privacy achieved in prior iterations of our greedy algorithms. As such, masking and flipping are respectively restricted to those SNVs where these quantities are strictly positive. 

\textsc{\textbf{Linkage Disequilibrium}}\newline
To defend against an attacker who leverages correlations to infer flipped/masked SNVs in a Beacon, we introduce a direct extension to our proposed \textsc{Soft-Privacy-Greedy} approach. Specifically, whenever an SNV $j$ with known correlations is flipped or masked, all SNVs correlated to it (the set $N_{LD}(j)$) are also flipped or masked, respectively. To capture the corresponding utility loss while deciding which SNV to flip or mask, we modify the \textsc{Soft-Privacy-Greedy} algorithm as follows. For each SNV $j$, we amend the marginal contribution of flipping $j$ to $\bar{\Delta}^F_j(P) = \frac{T_j\Delta^F_j}{\alpha |P| |N_{LD}(j)|}$, and the marginal contribution of masking SNV $j$ to be $\bar{\Delta}^M_j(P) = \frac{T_j\Delta^M_j}{(1-\alpha) |P| |N_{LD}(j)|}$.
The algorithm then proceeds as before, with the added condition that, any time a SNV $j$ is picked such that $N_{LD}(j) \ne \emptyset$, all SNVs in $N_{LD}(j)$ are also correspondingly flipped or masked. We refer to this modified algorithm as \textsc{Soft-Privacy-Greedy-LD} or SPG-LD.

\section{Discussion}

In this study, we presented a formalization to the problem of finding the optimal privacy-utility tradeoff when defending against membership-inference attacks on genomic summary releases (Beacon services and summary statistics), allowing - unlike prior studies - for the defense to combine masking of SNVs and the addition of noise to best balance the two, while accounting for the relative cost of adding noise as compared to suppressing responses. In the case of Beacons, We further evaluate an extension of the proposed approach against a more powerful attacker model where correlations between SNVs are exploited to infer modified responses. We present a simple yet principled greedy algorithm for both release models to discover the best privacy-utility balance which outperforms prior art, evaluating it against powerful attacks from recent literature. It should be recognized that our approach does have certain limitations, in that it is specific to the MI attacks that leverage an LRT score. More powerful attacks may be devised that defeat our approach. 
\edited{\section{Data Access}
All of the code and data used in this study are publicly available at the following public repository:  \url{https://doi.org/10.5281/zenodo.7510802}. }
\edited{\section{Competing Interests Statement}
The authors declare that they have no competing interests with the research communicated in this paper.}
\edited{\section{Acknowledgements}
The authors would like to thank Ashwin Kumar at Washington University in St. Louis for valuable inputs and discussions.}

\bibliographystyle{splncs04}
\bibliography{biblio}

\clearpage
\appendix
\renewcommand{\appendixpagename}{Supplementary Material}
\appendixpage


This supplement provides a comprehensive overview of related work, additional results for the various threat models described in the paper, runtime analysis for the SPG-B algorithm. and a parallel implementation of SPG-R which leverages problem structure to speed up computation.

\section{Related Work}
\label{relwork}
The prior literature on data privacy spans a host of techniques. There has been a substantial amount of research into mechanisms based on differential privacy \cite{dwork2006calibrating}. Recently, for instance, local differential privacy (LDP) \cite{erlingsson2014rappor} has gained popularity, where data is first perturbed locally before an aggregator computes overall database statistics. Wang and colleagues \cite{8731512} proposed novel LDP mechanisms, and extended them to aggregate computations on locally perturbed multidimensional data. Gu et al. \cite{gu2020providing} addressed the potential differences in privacy requirements at each data collection source for LDP with an input-discriminative extension to LDP. Takagi et al. \cite{9458927} proposed P3GM, a differentially-private generative model based on variational autoencoders to overcome the issue of a large amount of noise injected into high-dimensional data by traditional DP techniques. Xie and colleagues \cite{xie2021generalized} extended an encryption scheme used for IP addresses to more general datatypes, combined with a multi-view outsourcing method which generates one utility-preserving view of the data for analysis among several fake indistinguishable views in order to protect privacy. 

Attacks specific to Beacon services have evolved since their introduction in 2015. Shringarpure and Bustamante \cite{shringarpure2015privacy} illustrated how to leverage likelihood-ratio test (LRT) scores to make membership inference claims from Beacon responses. They specifically used Beacons from the 1000 Genomes Project \cite{10002015global} and the Personal Genome Project \cite{chervova2019personal} and demonstrated that only a small number of queries are sufficient to predict membership. The study assumed that allele frequencies are drawn from a Beta distribution. Raisaro and colleagues \cite{raisaro2017addressing} extended the attack using real allele frequencies instead of assuming that allele frequencies are drawn from the Beta distribution. The authors proposed three defense strategies against such a membership inference attack: a) a Beacon alteration strategy, where the Beacon flips all responses for unique alleles (i.e., only one individual in the dataset contains a minor allele at a given position), b) a random flipping strategy, where the Beacon responses for unique alleles are flipped, but randomly by sampling from a binomial distribution, and c) a query-budget strategy, where the contribution of each individual in the dataset to the Beacon responses is used to decide whether the individual's genome will be included in providing a Beacon response to authenticated Beacon users. In the query-budget approach, however, the authors assumed that the genomic sequencing error is 0, in order to simplify analysis. Greedy approaches in this special case with no sequencing error are discussed in \cite{venkatesaramani2021defending}.

Von Thenen and colleagues \cite{von2019re} introduced an allele-inference technique by leveraging linkage disequilibrium between alleles, where a higher-order Markov chain is used to infer alleles at positions of interest from a few correlated SNVs. The study showed that, by inferring hidden SNVs using the proposed technique, an attacker can make membership-inference claims, despite making far fewer queries to the Beacon. This allows the attacker to potentially bypass a query-budget defense, as well as defenses that mask SNVs with smaller minor allele frequencies. A related attack, introduced by Ayoz and colleagues \cite{ayoz2021genome}, considers genome reconstruction for evolving Beacons, where Beacon-responses are supplemented with phenotype metadata, and the attacker already knows about an individual's membership. Bu and colleagues  \cite{bu2021haplotype} introduced a haplotype-based membership inference attack that reconstructs haplotypes using allele frequencies, as opposed to relying on a target genome. Samani and colleagues \cite{samani2015quantifying} introduced am method that relied on high-order SNV correlations to carry out an inference attack on  Beacons using Markov models.

At the same time, several defenses against such attacks have been proposed. Wan and colleagues \cite{wan2017controlling} selected a subset of SNVs to flip by defining a differential discriminative power that captures a SNV's marginal contribution to the LRT score. The approach selected SNVs in decreasing order of the proposed metric, followed by a greedy local search to improve utility. Cho and colleagues \cite{cho2020privacy} used an approach based on differential privacy to flip Beacon responses, a method that forms one of the baselines we use in this paper. This study relied upon a differentially private geometric mechanism, treating SNVs in an independent manner. Finally, \cite{venkatesaramani2021defending} proposed greedy algorithms to select a subset of SNPs based on marginal impact, by drawing parallels to the set-cover problem. This study identified several special cases, such as very low DNA sequencing error, and when allele frequencies were assumed to be drawn from the Beta distribution. 

\section{SPG-B Runtime Analysis}
\label{appx:spg_runtime}
We show that the running time of the proposed \textsc{SPG-B} algorithm is quadratic in the number of SNV queries $m$ and linear in the number of individuals $m$ in the Beacon.

\begin{theorem}
The worst-case running time of the \textsc{Soft-Privacy-Greedy-Binary} algorithm is $\mathcal{O}(m^2 n)$.
\end{theorem}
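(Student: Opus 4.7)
The plan is to bound the total running time by a straightforward outer-times-inner product, exploiting the fact that SPG-B adds exactly one SNV to either $F$ or $M$ per iteration of its outer \texttt{while} loop. Since $F$ and $M$ are maintained as disjoint subsets of $Q$ (each SNV is either flipped, masked, or left untouched), we always have $|F|+|M|\le m$. Consequently, the outer loop terminates after at most $m$ iterations, even in the worst case where the stopping criterion $D\setminus C=\emptyset$ is never triggered before exhausting all queries.

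Next, I would account for the cost of a single outer iteration. The first inner \texttt{for} loop ranges over $j\in Q\setminus(F\cup M)$, which is $\mathcal{O}(m)$ queries. For each such $j$, computing the set $T_j=\{i\in D\setminus C\mid j\in P_i\}$ (equivalently, $|T_j|$) requires scanning the not-yet-covered individuals; assuming $P_i$ is stored so that membership $j\in P_i$ can be tested in $\mathcal{O}(1)$, this costs $\mathcal{O}(n)$. Each of $\bar\Delta^F_j$ and $\bar\Delta^M_j$ is then computed in $\mathcal{O}(1)$, and the argmax bookkeeping is also $\mathcal{O}(1)$, so the whole first inner loop is $\mathcal{O}(mn)$. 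The second inner \texttt{for} loop, which tests whether any remaining individual $i\in D\setminus C$ now satisfies the privacy constraint, runs over at most $n$ individuals; the sum $\sum_{j\in F}\Delta^F_{ij}+\sum_{j\in M}\Delta^M_{ij}$ can be evaluated in $\mathcal{O}(m)$ in the worst case (or in $\mathcal{O}(1)$ amortized if one maintains the running per-individual LRT score, updating only by the contribution of the newly selected SNV $l$). Either way, this second loop contributes at most $\mathcal{O}(nm)$ per outer iteration.

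Multiplying the outer bound by the per-iteration bound then gives $\mathcal{O}(m)\cdot\mathcal{O}(mn)=\mathcal{O}(m^2 n)$, which is the claimed worst-case runtime. The remaining objective-comparison bookkeeping at the end of each iteration (computing $U_t$ and comparing against $U$) is $\mathcal{O}(1)$ per iteration and therefore absorbed.

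The only subtle point, and hence the main thing to be careful about, is ensuring that the data structures assumed in the analysis are consistent with Algorithm~\ref{algo:SPG} as written. Concretely, I would spell out that $P_i$ is represented so that membership queries $j\in P_i$ take constant time (e.g., a bit-vector indexed by SNV), so that $|T_j|$ is computed in $\mathcal{O}(n)$ rather than $\mathcal{O}(nm)$, and that maintaining per-individual running scores $\eta_i+\sum_{j\in F}\Delta^F_{ij}+\sum_{j\in M}\Delta^M_{ij}$ across iterations avoids blowing the second loop up to $\mathcal{O}(nm)$ work when one instead writes it naively. With these implementation details fixed, the bound $\mathcal{O}(m^2 n)$ follows immediately from the outer-times-inner argument above.
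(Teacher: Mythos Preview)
Your proposal is correct and follows essentially the same outer-times-inner accounting as the paper: at most $m$ outer iterations, each doing $\mathcal{O}(mn)$ work, for a total of $\mathcal{O}(m^2 n)$. The only cosmetic difference is that the paper treats the first inner \texttt{for} loop as $\mathcal{O}(m)$ (implicitly assuming $|T_j|$ is available in constant time) and attributes the $\mathcal{O}(mn)$ factor entirely to the second loop, whereas you more carefully charge $\mathcal{O}(n)$ per SNV for computing $T_j$; since both inner loops are bounded by $\mathcal{O}(mn)$ either way, the final bound is unaffected.
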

\begin{proof}
    All operations inside the first \emph{for} loop are constant time, and the loop executes $m$ times, where $m$ is the number of SNVs - therefore yielding a complexity of $\mathcal{O}(m)$. The \emph{if} condition inside the second \emph{for} loop involves a sum over $m$ SNVs, and is therefore an $\mathcal{O}(m)$ operation, and this \emph{for} loop executes at most $n$ times in the worst case, with a total time complexity of $\mathcal{O}(mn)$. The remaining statements within the \emph{while} loop are constant time operations. The outer \emph{while} loop executes at most $m$ times, as all SNVs are flipped or masked in the worst case, and therefore the overall time complexity of SPG-B is $\mathcal{O}(m (m + mn + 1)) = \mathcal{O}(m^2 n)$. 
\end{proof}

\begin{remark}
In practice, $m\gg n$, therefore we can treat $n$ as a small constant, and the runtime is approximately $\mathcal{O}(m^2)$. \edited{We further note that the worst-case running time complexity of MIG and SPG-B is of the same magnitude, as they proceed almost identically, except that in SPG-B we consider marginal impacts of masking, as well as flipping each SNV, and compare the two - which are constant time operations inside a loop that iterates over all SNVs.}
\end{remark}

\section{Improving Runtimes for SPG-R}
\label{appx:runtime}
When optimizing the real valued noise using the Laplace mechanism, we can utilize some structural observations to avoid redundant computations over large matrices which, in turn, can cut down runtimes by orders of magnitude. Our first observation is that the scale of the Laplacian distribution from which random noise is drawn is directly proportional to the number of SNVs that are not yet masked. Recall that the scale of the Laplacian when no SNVs are masked is $m/n\epsilon$. If a $k^{th}$ fraction of SNVs remains unmasked, the scale of the Laplacian, accordingly, is $m/kn\epsilon$. Consider the noise added per SNV for a given value of $\epsilon$, when no SNVs are masked. This same amount of noise is achieved \emph{per SNV} when a $k^{th}$ fraction of SNVs is unmasked, with noise added corresponding to $\epsilon/k$, and herein lies our first runtime improvement. When a noise sample is drawn from a Laplacian distribution with scale $s$ for a given value of $\epsilon$, the same noise can be used at scale $s/k$ for a corresponding DP parameter $\epsilon/k$. 
\begin{algorithm}[ht!]
	\caption{The SPG-R (parallel) Algorithm}
        \DontPrintSemicolon
        \KwInput{A set of individuals $i\in D$, a prediction threshold $\theta$, weight parameter $w$, marginal contributions of masking $\Delta^M$, relative cost of adding noise $\alpha$, number of SNVs to mask per iteration $t$, set $E$ of candidate DP parameters, AAFs $x$ for individuals in $D$ and $\bar{p}$ for individuals in reference set $\bar{D}$.}
        \KwOutput{Subset of SNVs $M\subseteq Q$ to mask, real-valued noise vector $\delta$.}
        \KwInit{$M=\emptyset$, $C=\emptyset$, $U=\infty$, $\delta=0$}
        \SetKwFunction{FMain}{GetLR}
        \SetKwProg{Fn}{Function}{:}{}
        \Fn{\FMain{$x_i, \Delta, M$}}{
		\KwRet $\sum_{j\in Q\setminus M} -\Delta_{ij}$
        }
        \For{$\epsilon \in E$, \textbf{\upshape in parallel}}{
        Set $\delta^\epsilon = $ Laplacian($0, \frac{|Q|}{n\epsilon}$), $M_\epsilon = \emptyset$, $c_\epsilon=0$\\
            $\Delta^\epsilon_{ij} = -d_{ij}\log\frac{\bar{p}_j}{x_j + \delta^\epsilon_j} - (1-d_{ij}) \log\frac{1-\bar{p}_j}{1-(x_j + \delta^\epsilon_j)}$\\
            Set $\Delta^\epsilon_j = \frac{1}{|D|}\sum_{i\in D} \Delta^\epsilon_{ij}$\\
            Set $\Delta^{S\epsilon} = $\textsc{Sort}($\Delta^\epsilon_j$)\\
            \While{$Q\setminus M_\epsilon\ne\emptyset$}{
                Set $C_\epsilon=\emptyset$\\
                \For{$i\in D$}{
                    \If{\textsc{GetLR}($x_i, \Delta^\epsilon, M_\epsilon$)$\le \theta$}{
                        Set $C_\epsilon = C_\epsilon \cup i$\\
                    }
                }
                Set $U_\epsilon = \alpha||\delta_\epsilon||_1 + (1-\alpha)|M_\epsilon| - w|C_\epsilon|$\\
                \If{$U_\epsilon \le U$}{
                    \textsc{AcquireLock($U, \delta, M$)}\\
                    Set $U=U_\epsilon$\\
                    Set $\delta=\delta^\epsilon$\\
                    Set $M=M_\epsilon$\\
                    \textsc{ReleaseLock($U, \delta, M$)}\\
                }
                Set $ct=1$\\
                \While{$ct\le t$}{
                    Set $M_\epsilon = M_\epsilon \cup \Delta^{S\epsilon}_c$\\
                    Set $c=c+1$\\
                    Set $ct=ct+1$\\
                }
            }
    }
    \Return $M, \delta$\\
    \label{algo:SPG-R-parallel}
\end{algorithm}
While drawing a noise sample is not an expensive operation in itself, this observation allows us to re-use previously computed values for $\Delta^M_{ij}$ and $\Delta^{M(K)}_{ij}$ at different (scaled) values of $\epsilon$ as more SNVs are masked. As both $\Delta^M$ and $\Delta^{M(K)}$ contain one entry per individual per SNV, re-computing values for a fresh noise sample each time contributes significantly to overall runtime. As long as the choices of $\epsilon$ at the beginning of the algorithm are well-spread out (over multiple orders of magnitude, as the best solutions may involve masking a significant fraction of SNVs), we can avoid re-computing  $\Delta^M$ and $\Delta^{M(K)}$ as we change the scale of the Laplacian, instead assuming the noise to be generated for a correspondingly scaled value of $\epsilon$. We note that the $\ell_1$ norm of the noise for the objective function would still have to be recomputed over only the SNVs that remain unmasked, but this is a relatively inexpensive operation. 

The second structural observation about our solution approach is that the privacy-utility points explored for a given set of candidate $\epsilon$ values are independent of $w$, the relative importance of guaranteeing privacy over preserving utility. In Algorithm~\ref{algo:SPG-R}, threads for parallel processing are initialized after masking every $t$ SNVs. Even if we use global variables (one instance of noise $\delta$ and  $\Delta^M_{ij}$ or $\Delta^{M(K)}_{ij}$, depending on the attacker model, for each value of $\epsilon$), the repeated creation and synchronization of threads before masking the next set of SNVs can add significant overhead. To deal with this, we re-formulate our search on a per-$\epsilon$ basis, where each thread masks SNVs locally. Threads still share access to the matrix $x$, but avoid repeated function calls and synchronization wait times. Moreover, we can save all candidate solutions explored by recording $||\delta||, \sum_j y_j$, and $\sum_i z_i$ under either attack model, and for any value of $\alpha$ and $w$, and find the best solution in linear time over the saved candidate solutions, which are in turn linear in the number of SNVs. The algorithm, SPG-R (parallel), that takes advantage of these improvements, is provided in Algorithm~\ref{algo:SPG-R-parallel}.

\smallskip
\textsc{\textbf{Optimizing Over $\epsilon$}}\newline
We compare the two variants of SPG-R (binary and parallel) under the harder of the two attacker models - the adaptive threshold setting. 
	\begin{figure}[h!]
        \centering
        \captionsetup[subfigure]{justification=centering}
        \begin{subfigure}[t]{0.45\columnwidth}
            \includegraphics[width=\textwidth]{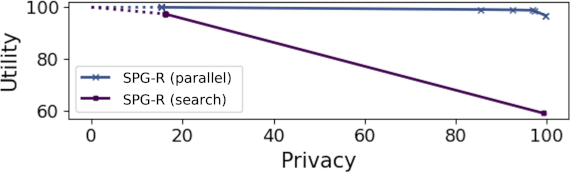}
            \caption{K=5}
            \label{fig:bin_V_par_5}
        \end{subfigure}
        \begin{subfigure}[t]{0.45\columnwidth}
            \includegraphics[width=\textwidth]{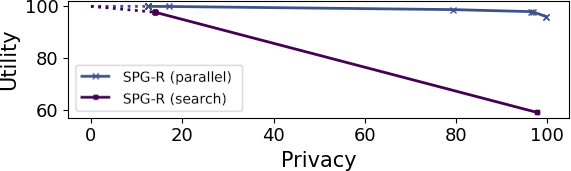}
            \caption{K=10}
            \label{fig:bin_V_par_10}
        \end{subfigure}
        \caption{Relative performance of SPG-R (binary) and SPG-R (parallel) under the adaptive threshold model}
        \label{fig:bin_v_parallel}
        \vskip 10pt
    \end{figure}
Fig.~\ref{fig:bin_v_parallel} shows the relative performance of the two variants when $K=5$ and $K=10$. Binary search was initialized with $\epsilon \in $\{10K, 10M\}, and candidate values for SPG-R (parallel) were selected from $\epsilon \in $\{10K, 50K, 100K, 500K, 1M, 5M, 10M\}. 
We can observe that SPG-R (parallel) significantly outperforms SPG-R (binary) in utility-privacy tradeoff, likely because the latter does not explore useful tradeoff points.

\smallskip
\textsc{\textbf{Empirical Runtime Comparison}}
    \begin{figure}[h!]
    \centering
    \includegraphics[width=0.55\columnwidth]{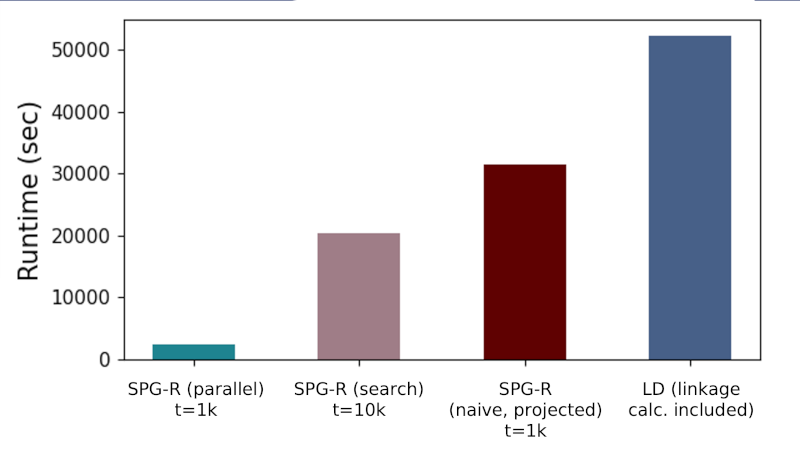}
    \caption{Runtime comparison between SPG-R variants and baselines.}
    \label{fig:spg_runtimes}
    \vskip 10pt
\end{figure}
Next, we compare the runtimes for the various methods used in Fig.~\ref{fig:spg_runtimes}. 
The number of SNVs masked in one iteration ($t$) used for each approach is indicated in the plot. We also compare the runtimes to a projected estimate of a non-parallel naive implementation of SPG-R, where solutions over the various candidates for $\epsilon$ are sequentially computed. Runtime for DP is omitted because it is too small compared to the rest. The runtime for SPG-R (binary) is an order of magnitude larger than SPG-R (parallel), even when masking 10 times the number of SNVs in each iteration ($t=10K$), taking about 5.5 hours in practice. SPG-R (binary) with $t=1000$ can therefore be expected to take in excess of 55 hours. The estimated runtime for naive implementation of SPG-R is calculated by multiplying the average runtime of SPG-R (parallel) with the number of threads and adding some marginal overhead for thread creation and synchronization. Runtime for Linkage includes the time taken to compute linkage disequilibrium coefficients for pairs of SNVs using a sliding window of 500 SNVs (250 on either side of each SNV), which takes about 15 hours in practice, although we note that this is a parallelizable problem with scope for shared data structures, and the computation only needs to be done once. 

\section{Additional Results - Fixed Threshold}
\label{appx:fix_thresh}
First, we present additional results where SPG-B combines flipping and masking, while baselines solely flip or mask SNVs for $\theta=-750$ in Fig~\ref{fig:SPG-B_fix_n750}. Performance, in this case, is similar to the results for $\theta=-250$, in that SPG-B Pareto dominates all baselines, and shows significant improvements in utility over DP and RF - both when baselines only flip SNVs (Fig.~\ref{fig:mdf_flp_fix_n750}) or mask SNVs (Fig.~\ref{fig:mdf_msk_fix_n750}).

\begin{figure}[ht!]
    \centering
        \captionsetup[subfigure]{justification=centering}
        \savebox{\largestimage}{\includegraphics[width=0.45\columnwidth]{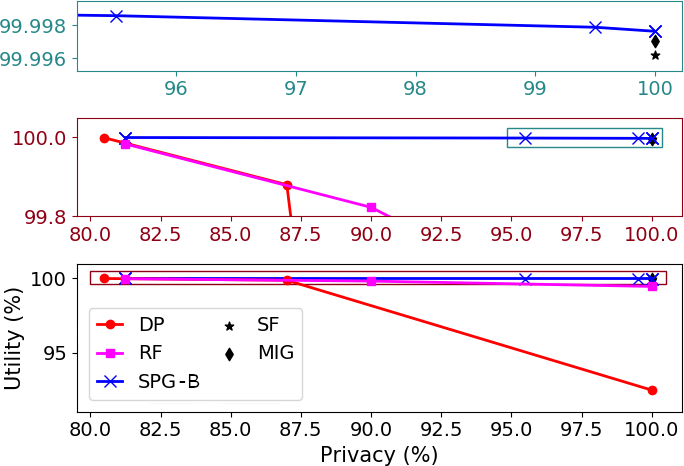}}
        \begin{subfigure}[t]{0.45\columnwidth}
            \usebox{\largestimage}
            \caption{$\theta=-750$, baselines only flip SNVs}
            \label{fig:mdf_flp_fix_n750}
        \end{subfigure}
        \begin{subfigure}[t]{0.45\columnwidth}
            \raisebox{\dimexpr.5\ht\largestimage-.5\height}{
                \includegraphics[width=\textwidth]{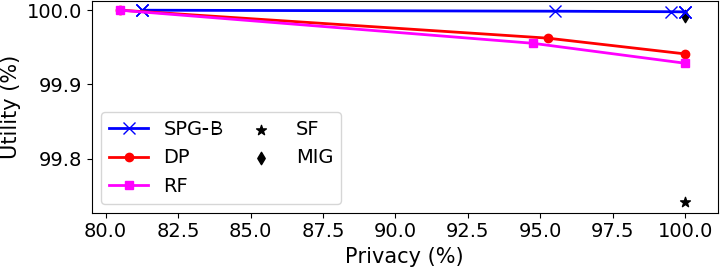}
            }
            \caption{$\theta=-750$, baselines only mask SNVs}
            \label{fig:mdf_msk_fix_n750}
        \end{subfigure}
        \caption{Utility-privacy plots for the fixed threshold attack model for Beacons, compared to baselines.}
        \label{fig:SPG-B_fix_n750}
        \vskip 10pt
\end{figure}

\smallskip
\textsc{\textbf{Special Case: Only Flipping SNVs}}\newline
We now consider the special case where all approaches, including SPG-B, are restricted to flipping SNVs.
This represents a scenario where suppressing Beacon responses may be impractical. In this setting, in addition to SPG-B and the various baseline methods shown in the more general setting, we also present the optimal solution computed using CPLEX \cite{cplex2009v12}, an ILP-solving toolkit. Since the original ILP in~\eqref{eq:MIP} is unable to scale to a search space consisting of 1.3 million SNVs, we restrict the ILP to search for an optimal utility-privacy balance over the SNVs identified by MIG in this setting (on the order of $10^2$). In this scenario, the higher the value of $\theta$, the more SNVs the defender has to flip to guarantee privacy. A higher value of $\theta$ thus more clearly demonstrates the differences in utility across the methods, and therefore we present results for $\theta=0$ and $\theta=1000$.

\begin{figure}[ht]
	\centering
	\captionsetup[subfigure]{justification=centering}
	\begin{subfigure}[t]{0.45\columnwidth}
		\includegraphics[width=\textwidth]{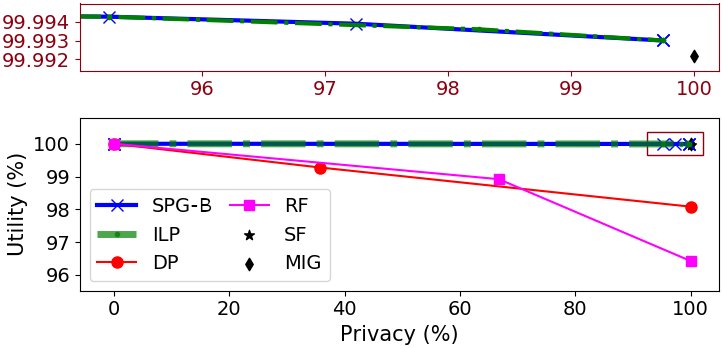}
		\caption{$\theta=0$}
		\label{fig:flip_theta_0}
	\end{subfigure}
	\begin{subfigure}[t]{0.45\columnwidth}
		\includegraphics[width=\textwidth]{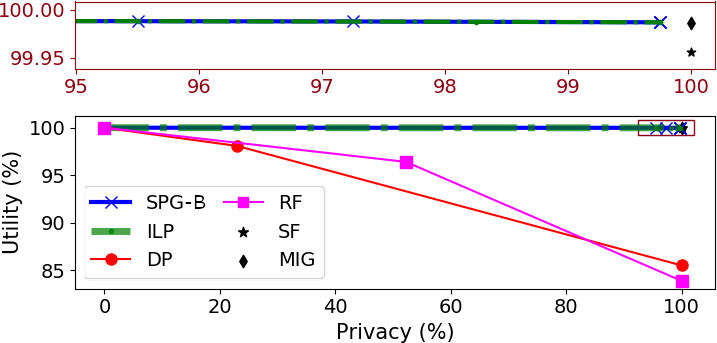}
		\caption{$\theta=1000$}
		\label{fig:flip_theta_1000}
	\end{subfigure}
	\caption{Fixed threshold attack model when all approaches only flip SNVs.}
	\label{fig:flip_theta}
    \vskip 10pt
\end{figure}

Fig.~\ref{fig:flip_theta} shows results in this setting when $\theta=0$ and $\theta=1000$. SPG-B again Pareto dominates DP, RF, and SF. 
SF guarantees privacy for all individuals while offering much lower utility. MIG guarantees privacy for all individuals with a slightly lower utility when $\theta=1000$, dropping further when $\theta=0$. In practice, the difference between the performance of MIG and SPG-B arises from flipping about 10 additional SNVs to guarantee privacy for only a single individual in the dataset. For a very large value of the weight parameter $w$, SPG-B produces the same solution as MIG in this setting.

\textsc{\textbf{Special Case: Only Masking SNVs}}
We now consider the alternative case where SNVs are only masked. This setting also serves to demonstrate the greater loss of utility that must be tolerated to achieve privacy for all using just masking. Note that the impact of masking a Beacon response is smaller than that of flipping it.

\begin{figure}[ht]
	\centering
	\captionsetup[subfigure]{justification=centering}
	\begin{subfigure}[t]{0.45\columnwidth}
		\includegraphics[width=\textwidth]{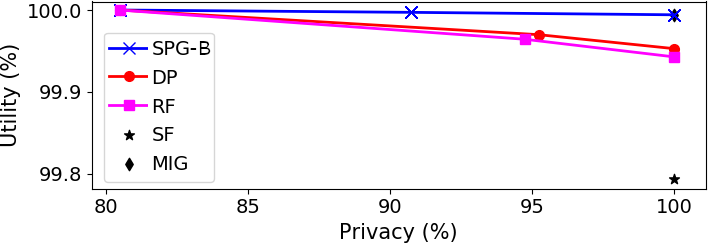}
		\caption{$\theta=-750$}
		\label{fig:mask_theta_n750}
	\end{subfigure}
	\begin{subfigure}[t]{0.45\columnwidth}
		\includegraphics[width=\textwidth]{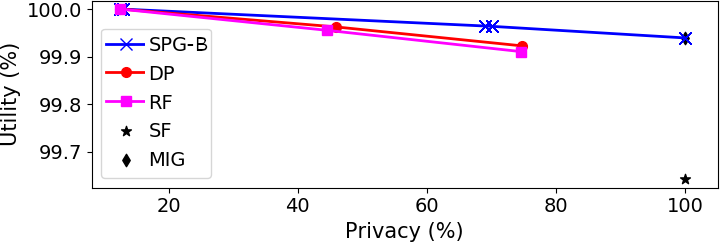}
		\caption{$\theta=-250$}
		\label{fig:mask_theta_n250}
	\end{subfigure}
	\caption{Fixed threshold attack model when all approaches only mask SNVs.}
	\label{fig:mask_theta}
 \vskip 10pt
\end{figure}

Fig.~\ref{fig:mask_theta} presents the results
when the $\theta$ prediction threshold is set to -250 and -750. Yet again, we observe that SPG-B Pareto dominates all baselines, with MIG offering comparable utility when privacy of all individuals is necessarily guaranteed. Comparing the performance of SPG-B between Fig.~\ref{fig:mask_theta} and Fig.~\ref{fig:SPG-B_fix} for $\theta=-250$, it can be seen that choosing to flip a small number of SNVs and masking the remaining greatly improves utility.

\section{Additional Results - Adaptive Threshold}
\label{appx:ada_thresh}
Here, we present some additional results in the adaptive threshold setting for both SPG-B and SPG-R. First, we compare the performance of SPG-B to various baselines, when the threshold is set to the mean of $K=5$ lowest percentile of LRT scores, where baselines either flip or mask SNVs, while SPG-B combines both. From Fig.~\ref{fig:SPG-B_ada_5}, we can observe that SPG-B once again dominates all baselines. None of the baselines offer any privacy when they are restricted to masking SNVs in this case.

\begin{figure}[ht!]
    \centering
        \captionsetup[subfigure]{justification=centering}
        \begin{subfigure}[t]{0.45\columnwidth}
            \includegraphics[width=\textwidth]{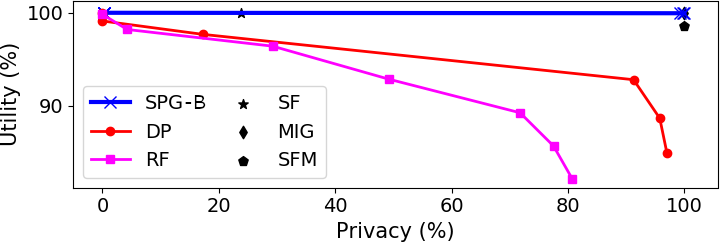}
            \caption{$K=5$, baselines only flip SNVs}
            \label{fig:mdf_flp_ada_5}
        \end{subfigure}
        \begin{subfigure}[t]{0.45\columnwidth}
            \includegraphics[width=\textwidth]{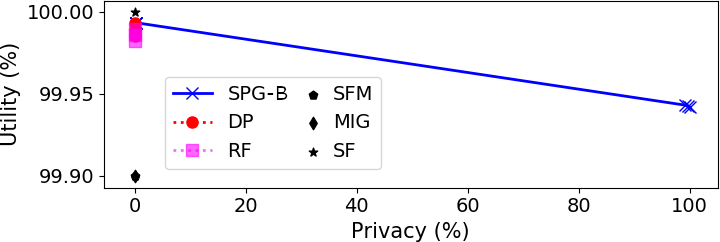}
            \caption{$K=5$, baselines only mask SNVs}
            \label{fig:mdf_msk_ada_5}
        \end{subfigure}
        \caption{Utility-privacy plots for the adaptive threshold attack model for AAF releases, compared to baselines.}
        \label{fig:SPG-B_ada_5}
        \vskip 10pt
\end{figure}

\textsc{\textbf{Special Case: Only Flipping SNVs}}\newline
Finally, we present results in the case where all methods, including SPG-B, only flip SNVs. Fig.\ref{fig:flip_ada} compares SPG-B to the baselines. It can be seen that  SPG-B offers a better privacy-utility balance than all methods, except for MIG when privacy of all individuals is to be guaranteed. In comparison to Figs.~\ref{fig:SPG-B_ada} and ~\ref{fig:SPG-B_ada_5}, it can be seen that SPG-B has a slightly lower utility. This further illustrates the value of applying a method that uses both flipping and masking.
			
			\begin{figure}[ht!]
				\centering
				\captionsetup[subfigure]{justification=centering}
				\begin{subfigure}[t]{0.45\columnwidth}
					\includegraphics[width=\textwidth]{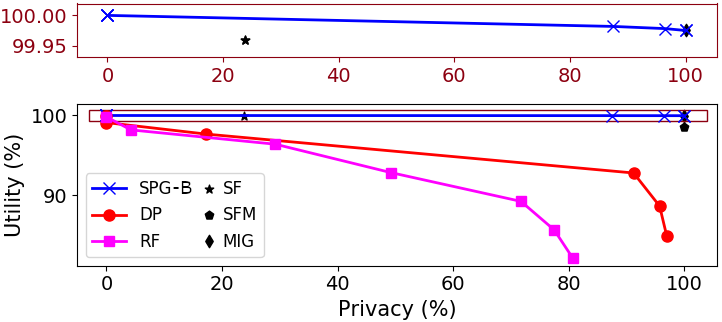}
					\caption{$K=5$}
					\label{fig:flip_ada_5}
				\end{subfigure}
				\begin{subfigure}[t]{0.45\columnwidth}
					\includegraphics[width=\textwidth]{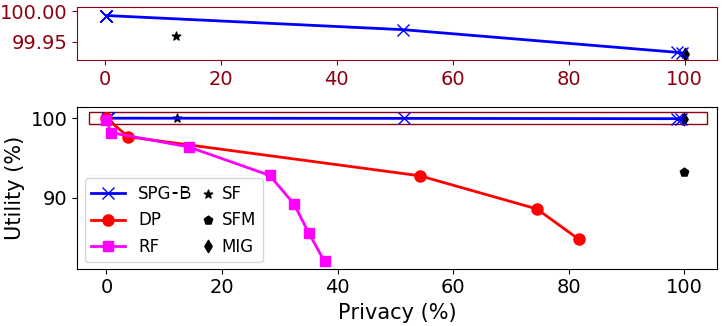}
					\caption{$K=10$}
					\label{fig:flip_ada_10}
				\end{subfigure}
				\caption{Adaptive threshold attack model, where all approaches only flip SNVs.}
				\label{fig:flip_ada}
                \vskip 10pt
			\end{figure}

\section{Additional Results - Bounded Risk}
\label{appx:bounded}
Here, we compare the performance of SPG-R (parallel) to the various baselines, under the assumption of bounded risk, where the sensitivity of the mean query is calculated in the average case instead of the worst case, and correspondingly the scale of the Laplacian depends on the average number of bits by which a genome in the dataset differs from those not in the dataset. On our data, the average sensitivity is $148515$, which is an order of magnitude smaller than the number of SNVs.

This has no qualitative impact on the best solutions found by our approach, except that a correspondingly smaller value of DP parameter $\epsilon$ is now used to generate the same amount of noise. Because our candidate $\epsilon$ values were well spread out ($\epsilon\in\{10K, 50K, 100K, 500K, 1M, 5M, 10M\}$), our approach works well without any modifications. 

\textsc{\textbf{Fixed Threshold Attacks}}\newline
Fig.~\ref{fig:fixed_bounded} compares SPG-R to the various baselines under the fixed threshold attack model, when $\theta=0$ for $\alpha=0.5$ and $\alpha=0.9$. As is the case with unbounded risk, SPG-R dominates all baselines when $\alpha=0.5$, but as masking gets relatively cheaper compared to adding noise, SPG-R is dominated by a pure-masking strategy, with the difference being more pronounced in the bounded risk scenario.

\begin{figure}[ht!]
    \centering
    \captionsetup[subfigure]{justification=centering}
    \begin{subfigure}[t]{0.45\columnwidth}
        \includegraphics[width=\textwidth]{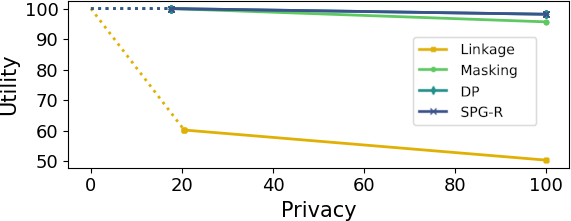}
        \caption{$\alpha=0.5$}
        \label{fig:fixed_bounded_0.5}
    \end{subfigure}
    \begin{subfigure}[t]{0.45\columnwidth}
        \includegraphics[width=\textwidth]{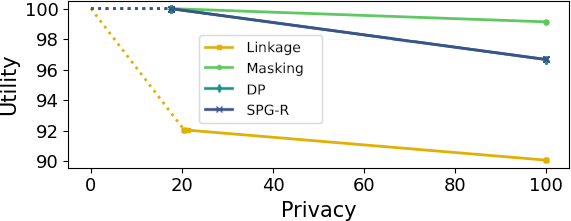}
        \caption{$\alpha=0.9$}
        \label{fig:fixed_bounded_0.9}
    \end{subfigure}
    \caption{Performance of SPG-R compared to baselines in the fixed threshold setting (bounded risk) $\theta=0$.}
    \label{fig:fixed_bounded}
    \vskip 10pt
\end{figure}
\smallskip

\textsc{\textbf{Adaptive Threshold Attacks}}\newline
The performance with bounded risk in the adaptive threshold setting yet again qualitatively mirrors the results in the unbounded risk setting, with SPG-R dominating all baselines, as we can observe from Fig.~\ref{fig:ada_bounded}. 

\begin{figure}[ht!]
    \centering
    \captionsetup[subfigure]{justification=centering}
    \begin{subfigure}[t]{0.45\columnwidth}
        \includegraphics[width=\textwidth]{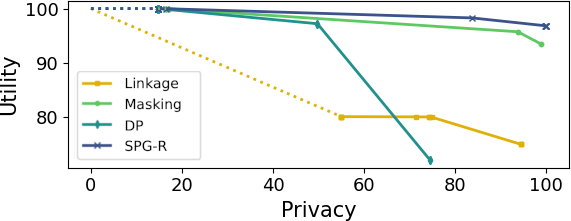}
        \caption{$K=5, \alpha=0.75$}
        \label{fig:ada_bounded_5_0.75}
    \end{subfigure}
    \begin{subfigure}[t]{0.45\columnwidth}
        \includegraphics[width=\textwidth]{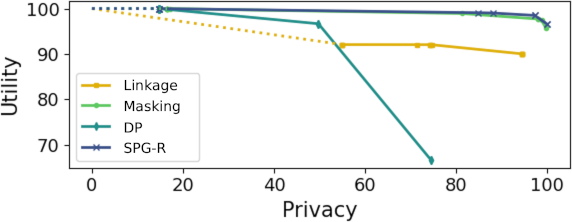}
        \caption{$K=5, \alpha=0.9$}
        \label{fig:ada_bounded_5_0.9}
    \end{subfigure}
    
    \begin{subfigure}[t]{0.45\columnwidth}
        \includegraphics[width=\textwidth]{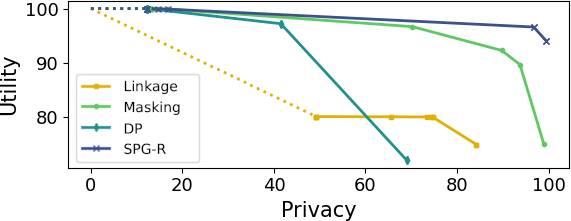}
        \caption{$K=10, \alpha=0.75$}
        \label{fig:ada_bounded_10_0.75}
    \end{subfigure}
    \begin{subfigure}[t]{0.45\columnwidth}
        \includegraphics[width=\textwidth]{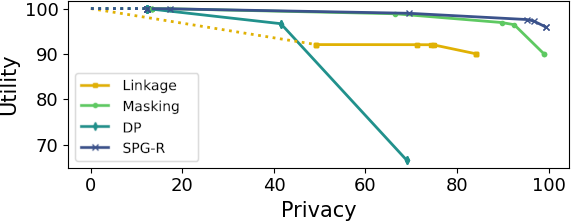}
        \caption{$K=10, \alpha=0.9$}
        \label{fig:ada_bounded_10_0.9}
    \end{subfigure}
    \caption{Performance of SPG-R compared to baselines in the adaptive threshold setting (bounded risk) for $K=5$ and $K=10$.}
    \label{fig:ada_bounded}
    \vskip 10pt
\end{figure}

\begin{figure}[ht!]
    \centering
    \includegraphics[width=0.45\columnwidth]{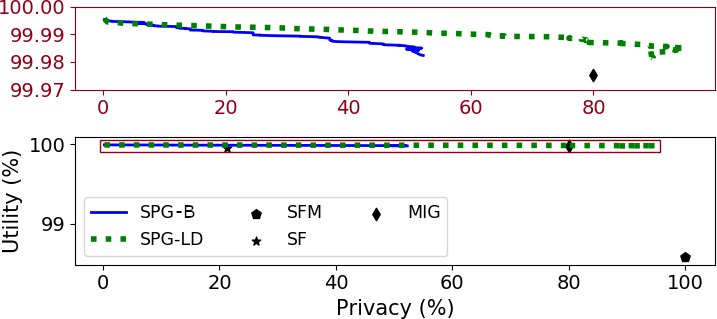}
    \caption{Adaptive threshold attack model, where the attacker leverages correlation data, and baselines only flip SNVs. The Greedy approach permitted to flip or mask all SNVs. $K=5$}
    \label{fig:LD_ada_flip_noDNF}
    \vskip 10pt
\end{figure}

\section{Additional Results - LD}
\label{appx:LD}

In the adaptive threshold case, recall that flipping or masking is restricted to SNVs where $\Delta^{F(K)}_{ij}\ge 0$ and $\Delta^{M(K)}_{ij}\ge 0$ respectively for all individuals in the Beacon. When a SNV $j$ is flipped or masked by \emph{SPG-LD}, there may be SNVs in $N_{LD}(j)$ for which these inequalities may not hold true, and are therefore not flipped or masked. These SNVs however may still be used to infer flipped or masked Beacon responses by measuring correlations.
    
The correlation attack had no impact on SPG-BB when $K=10$. None of the SNVs picked by SPG-B in this setting to either flip or mask had correlations with other SNVs within a sliding window of $500$ SNVs ($250$ on each side). Therefore, we ran a second set of experiments that neglects these inequality constraints, essentially allowing Beacon responses for all SNVs to be flipped or masked, with the consequence that privacy achieved in early iterations of the greedy algorithms may be reduced by later flips or masks. Fig.~\ref{fig:LD_ada_flip_noDNF} presents the results. While the performance of SF, SFM and MIG are unchanged compared to the previous setting in Fig.~\ref{fig:LD_ada}, notice that for both SPG-B and SPG-LD, the privacy starts decreasing after a point, as more and more SNVs are flipped or masked. The maximum privacy achieved by the modified greedy algorithm accounting for correlations, SPG-LD, is around $90\%$. This is a marginal increase compared to the previous setting.

\edited{
    \section{Impact of Increasing Population Sizes}
    Here, we present additional results comparing the relative performance of DP to SPG-B and SPG-R for Beacons and summary statistics as we increase the number of individuals in the dataset ($n$). With the available hardware, we were able to experiment with summary statistics for datasets consisting of up to 600 individuals. Fig.~\ref{fig:DPvsSPGB} shows that neither approach is significantly affected by population size in the case of Beacons. Fig.~\ref{fig:DPvsSPGR} suggests that, while neither approach is affected in the fixed-threshold setting for summary statistics, the relative performance of DP deteriorates in terms of utility as the number of individuals increases in the adaptive threshold case. The slight difference in performance trends between these figures and the  results presented in the main paper is due to randomly splitting the universe of all individuals ($D\cup \bar{D}$) into two sets of varying sizes here, as compared to working with a particular 400-400 split in previous settings.}

    \begin{figure}
        \centering
        \begin{subfigure}[t]{0.45\columnwidth}
            \includegraphics[width=\textwidth]{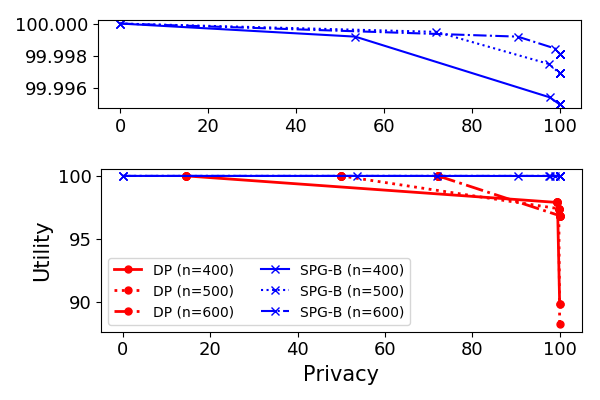}
            \caption{$\theta=-250$, DP flips SNVs}
            \label{fig:SPGvDP_fix_beacon}
        \end{subfigure}
        \begin{subfigure}[t]{0.45\columnwidth}
            \includegraphics[width=\textwidth]{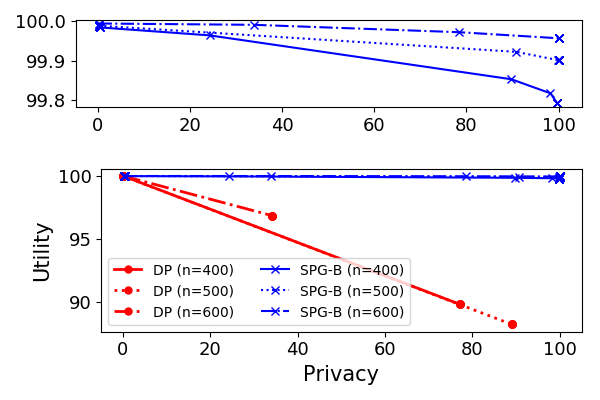}
            \caption{$K=10$, DP flips SNVs}
            \label{fig:SPGvDP_ada_beacon}
        \end{subfigure}
        \caption{Relative performance of DP and SPG-B with increasing dataset size ($n$). Zoomed-in portions shown on top.}
        \label{fig:DPvsSPGB}
        \vskip 10pt
    \end{figure}
    
    \begin{figure*}[ht!]
    \centering
        \begin{subfigure}[t]{0.45\columnwidth}
            \includegraphics[width=\textwidth]{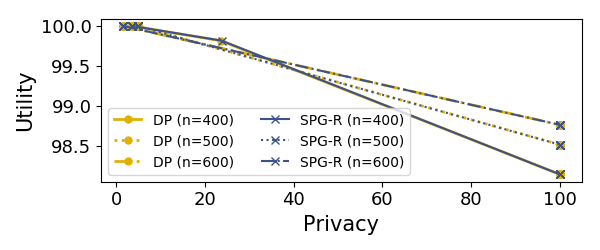}
            \caption{$\theta=0, \alpha=0.5$}
            \label{fig:SPGvDP_fix_alpha_0.5}
        \end{subfigure}
        \begin{subfigure}[t]{0.45\columnwidth}
            \includegraphics[width=\textwidth]{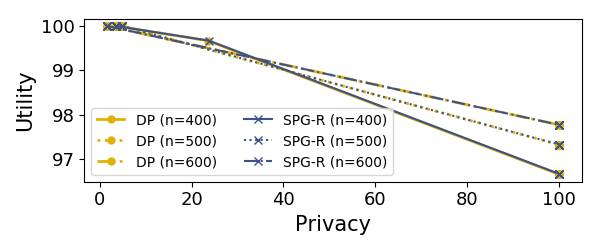}
            \caption{$\theta=0, \alpha=0.9$}
            \label{fig:SPGvDP_fix_alpha_0.9}
        \end{subfigure}
        
        \begin{subfigure}[t]{0.45\columnwidth}
            \includegraphics[width=\textwidth]{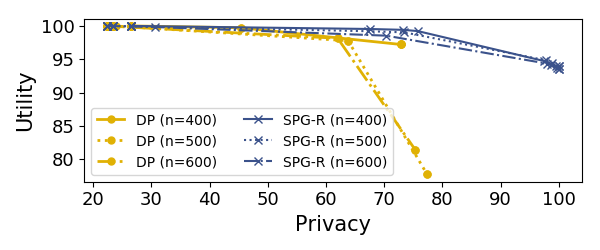}
            \caption{$K=10, \alpha=0.75$}
            \label{fig:SPGvDP_ada_alpha_0.75}
        \end{subfigure}
        \begin{subfigure}[t]{0.45\columnwidth}
            \includegraphics[width=\textwidth]{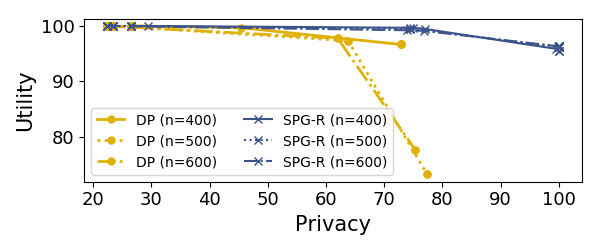}
            \caption{$K=10, \alpha=0.9$}
            \label{fig:SPGvDP_ada_alpha_0.9}
        \end{subfigure}
    \caption{Relative performance of DP and SPG-R with increasing dataset size ($n$)}
    \label{fig:DPvsSPGR}
    \vskip 10pt
    \end{figure*}

\section{Empirical Runtime - ILP vs SPG-B}
Finally, we present empirical results comparing the running times of the ILP and SPG-B (refer Section~\ref{appx:fix_thresh} for details). Because the ILP was only feasible over a small set of SNVs chosen by MIG, we restrict SPG-B to the same set of SNVs to ensure a fair comparison. For both approaches, running time was only measured for the steps involved in solving the optimization, ignoring all initialization and pre-computation steps. This analysis was performed on a 2018 MacBook Pro with an Intel i7 processor and 16 GB of RAM.

Fig.~\ref{fig:runtimes} depicts that the running time for the ILP is between 2 to 6 orders of magnitudes larger, depending on the weight parameter. For small weights, the optimal solution is to do nothing, and for large enough weights, the optimal solution is to guarantee privacy for everyone - these are solutions that a branch-and-bound approach is expected to arrive quickly at - which is what IBM CPLEX uses in order to compute optimal solutions. For weight parameters between these two extremes, we expect that the branching would need to continue to a greater depth, leading to the spike in running time at $w=0.2$. On the other hand, SPG-B continues flipping or masking SNVs until all individuals are covered, regardless of $w$, the weight is only used to update the current best solution which is eventually returned. Therefore the running time is constant across all weights.

\begin{figure*}[b!]
    \centering
    \includegraphics[width=0.8\textwidth]{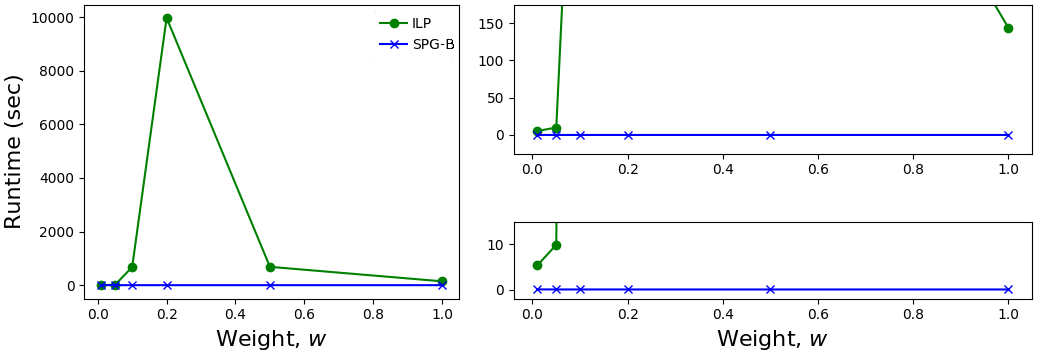}
    \caption{Empirical Comparison of running times for ILP and SPG-B (using only SNPs identified by MIG)}
    \label{fig:runtimes}
\end{figure*}
\end{document}